\newcommand{\blind}{1}
\pgfplotsset{width=7cm,compat=1.8}
\newtheorem{assumption}{Model Assumption}
\newtheorem{proposition}{Proposition}
\newtheorem{corollary}{Corollary}
\begin{document}


\if1\blind
{
  \title{Bayesian Probabilistic Numerical Methods in Time-Dependent State Estimation for Industrial Hydrocyclone Equipment}
  \author{Chris. J. Oates$^{1,2}$, Jon Cockayne$^3$, Robert G. Aykroyd$^4$, Mark Girolami$^{5,2}$ \\
$^1$Newcastle University \\
$^2$Alan Turing Institute \\
$^3$University of Warwick \\
$^4$University of Leeds \\
$^5$Imperial College London }
  \maketitle
} \fi

\if0\blind
{
  \bigskip
  \bigskip
  \bigskip
  \begin{center}
    {\LARGE\bf Bayesian Probabilistic Numerical Methods in Time-Dependent State Estimation for Industrial Hydrocyclone Equipment}
\end{center}
  \medskip
} \fi

\begin{abstract}
The use of high-power industrial equipment, such as large-scale mixing equipment or a hydrocyclone for separation of particles in liquid suspension, demands careful monitoring to ensure correct operation.
The fundamental task of state-estimation for the liquid suspension can be posed as a time-evolving inverse problem and solved with Bayesian statistical methods.
In this paper, we extend Bayesian methods to incorporate statistical models for the error that is incurred in the numerical solution of the physical governing equations.
This enables full uncertainty quantification within a principled computation-precision trade-off, in contrast to the over-confident inferences that are obtained when all sources of numerical error are ignored.
The method is cast within a sequential Monte Carlo framework and an optimised implementation is provided in \verb+Python+.
\end{abstract}

\noindent%
{\it Keywords:} Inverse Problems, Electrical Tomography, Partial Differential Equations, Probabilistic Meshless Methods, Sequential Monte Carlo


\section{Introduction}

Hydrocyclones provide a simple and inexpensive method for removing solids from liquids, as well as separating two liquids according to their relative densities (assuming equal fluid resistances) \citep{Gutierrez2000}.
They have widespread applications, including in areas such as environmental engineering and the petrochemical industry \citep{Sripriya2007}.
In particular, they have few moving parts, can handle large volumes and are relatively inexpensive to maintain.
This makes them ideal as part of a continuous process in hazardous industrial settings and contrasts with alternatives, such as filters and centrifuges, which are more susceptible to breakdown and/or have higher running costs.
The physical principles governing the hydrocyclone are simple; a mixed input is forced into a cone-shaped tank at high pressure, to create a circular rotation.
This rotation forces less-dense material to the centre and denser material to the periphery of the tank. 
The less-dense material in the core can then be extracted from the top (overflow) and the denser material removed from the bottom of the tank (underflow).
This mechanism is illustrated in Fig. \ref{fig:hydrocyclone sketch}.

Continual monitoring of the hydrocyclone is essential in most industrial applications, since the input flow rate is an important control parameter that can be adjusted to maximise the separation efficiency of the equipment.
Our focus in this work is on state estimation for the internal fluid.
Indeed, the high pressures that are often involved necessitate careful observation of the internal fluid dynamics to ensure safety in operation \citep{Bradley2013}.

\begin{figure}
\centering

\begin{subfigure}[b]{0.45\textwidth}
\centering
\begin{tikzpicture}
  \coordinate (O) at (0,-1);
  
  \begin{axis}[
    view       = {-25}{-25},
    axis lines = middle,
    axis line style={draw=none},
    zmax       = 30,
    height     = 5cm,
    xtick      = \empty,
    ytick      = \empty,
    ztick      = \empty,
    width=110pt,
    at={(-30pt,-37pt)}
  ]
  \addplot3+ [
    ytick      = \empty,
    yticklabel = \empty,
    domain     = 0:8*pi,
    samples    = 400,
    samples y  = 0,
    mark       = none,
    black,
  ]
  ( {x*sin(0.28*pi*deg(x))},{x*cos(0.28*pi*deg(x)},{x});
  \end{axis}

  \begin{scope}
    \def\rx{0.31}
    \def\ry{0.15}
    \def\z{1.45}

    \path [name path = ellipse]    (0,\z) ellipse ({\rx} and {\ry});
    \path [name path = horizontal] (-\rx,\z-\ry*\ry/\z) -- (\rx,\z-\ry*\ry/\z);
    \path [name intersections = {of = ellipse and horizontal}];

    \draw[fill = gray!20, blue!10] (intersection-1) -- (0,0.5)
      -- (intersection-2) -- cycle;
    
    \draw[fill = gray!20, densely dashed] (0,\z) ellipse ({\rx} and {\ry});
  \end{scope}

  \draw (0.25,0.4) -- (0.9,0.1) node at (1.8,0.0) {more dense};
  \draw (0,0.9) -- (-0.9,0.1) node at (-1.8,0.0) {less dense};

  \filldraw (O) circle (1pt) node[below] {underflow};

  \draw[] (O) to (-1.33,1.33);
  \draw[] (O) -- (1.33,1.33);

  \draw[black] (-1.36,1.46) arc [start angle = 170, end angle = 10,
    x radius = 13.8mm, y radius = 3.6mm];
  \draw[black] (-1.29,1.52) arc [start angle=-200, end angle = 20,
    x radius = 13.75mm, y radius = 3.15mm];

  \draw[ball color=blue,shading=ball, opacity = 0.2] (1.37,1.42) arc [start angle = 0, end angle = 360,
    x radius = 13.7mm, y radius = 3.3mm];

  \draw (-1.2,2.2) -- (-0.1,1.5) node at (-1.37,2.37) {overflow};
  
\end{tikzpicture}
\caption{Hydrocyclone tank schematic}
\end{subfigure}
\begin{subfigure}[b]{0.45\textwidth}
\centering
\begin{tikzpicture}

\draw[ball color=blue,shading=ball, opacity = 0.2,line width=4pt] (2,0) to [out = 180, in = 0] (0,0) to [out = 180, in = 90] (-1,-1) to [out = -90, in = 180] (0,-2) to [out = 0, in = -90] (1,-1) to [out = 90, in = -45] (0.7071,-0.2929) to [out = 0, in = 180] (2,-0.2929);

\node (n1) at (2,-0.15) {};
\node (n2) at (3.5,-0.15) {input flow};
\path[->] (n2) edge (n1);

\node (n3) at (-0.5,-1) {};
\node (n4) at (0,-1.5) {};
\path[->] (n3) edge [bend right = 45] (n4);

\node (n5) at (0,-1.5) {};
\node (n6) at (0.5,-1) {};
\path[->] (n5) edge [bend right = 45] (n6);

\node (n7) at (0.5,-1) {};
\node (n8) at (0,-0.5) {};
\path[->] (n7) edge [bend right = 45] (n8);

\node (n9) at (0,-0.5) {};
\node (n10) at (-0.5,-1) {};
\path[->] (n9) edge [bend right = 45] (n10);

\end{tikzpicture} \vspace{20pt}
\caption{Cross-section (top of tank)}
\end{subfigure}
\caption{A simplified schematic description of typical hydrocyclone equipment.
(a) The tank is cone-shaped with overflow and underflow pipes positioned to extract the separated contents.
(b) Fluid, a mixture to be separated, is injected at high pressure at the top of the tank to create a vortex.
Under correct operation, less-dense materials are directed toward the centre of the tank and denser materials are forced to the peripheries of the tank.}
\label{fig:hydrocyclone sketch}
\end{figure}
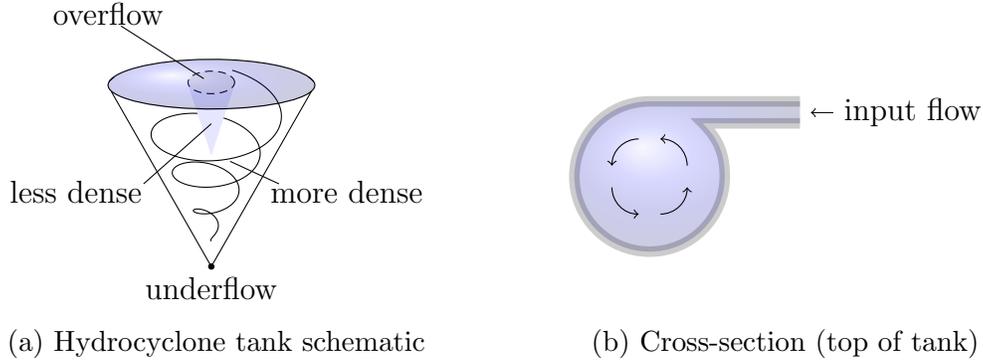

\subsection{Statistical Challenges}

Direct observation of the internal flow of the fluids is difficult or impossible due to, for example, the reinforced walls of the hydrocyclone and the opacity of the mixed component.
Under correct operation, the output (overflow and underflow) can be measured and tested for purity, but advanced indication of a potential loss of efficiency is desirable, if not essential in most industrial contexts.
Such a warning allows for adjustment and hence avoidance of impending catastrophic failure.
One possible technique for monitoring the internal flow is electrical impedance tomography (EIT).
The target of an EIT analysis is the electrical conductivity field $a^\dagger$ of the physical object; the conductivities of different fluid components will in general differ and this provides a means to measure the fluid constituents.
This technique has many applications in medicine, as well as industry, as it provides a non-invasive method to estimate internal structure from external measurements (the \emph{inverse problem}) \citep{Gutierrez2000}.
Further, it is ideal for industrial processes as it is possible to collect data at rates of several hundred frames per second, hence allowing real-time monitoring and control of sensitive systems.
However, the rapid acquisition of data requires equally rapid analysis and the nature of EIT requires that low-accuracy approximations to the physical governing equations are needed to keep pace with incoming data in the monitoring context \citep{Hamilton2018}.
This is due to the computational demands that are posed by the repeated solution of physical governing equations (the \emph{forward problem}) in evaluation of the statistical likelihood.
However, in standard approaches, the error introduced by a crude discretisation of the physical governing equations is not accounted for and may lead to an over-optimistic view of the precision of results.
This could lead to misleading interpretations of the results and hence potentially dangerous mis-control; it is therefore important to account for the presence of an unknown and non-negligible discretisation error in interpretation of the statistical output.

\subsection{Probabilistic Numerical Methods}

Probabilistic numerics \citep{Hennig2015} is an emergent research field that aims to model the uncertainty in the solution space of the physical equations that arises when the forward problem is only approximately solved.
In contrast to conventional emulation methods \citep{Kennedy2001}, which are {\it extrusive} in the sense that the physical equations are treated as a black box, probabilistic numerical methods are {\it intrusive} and seek to model the error introduced in the numerical solution due to discretisation of the original continuous physical equations.
Thus a probabilistic numerical method provides \emph{uncertainty quantification for the forward problem} that is meaningful, reflective of the specific discretisation scheme employed, and enables a principled computation-precision trade-off, where the presence of an unknown discretisation error is explicitly accounted for by marginalisation over the unknown solution to the forward problem \citep{Briol2016,Cockayne2017}.
This paper contributes a rigorous assessment of probabilistic numerical methods for the Bayesian solution of an important inverse problem in industrial process monitoring, detailed next.

\subsection{Our Contributions}

The scientific problem that we consider is Bayesian state estimation for the time-evolving conductivity field of internal fluid using data obtained via EIT.
The Bayesian approach to inverse problems is well-studied \citep{Stuart2010,Nouy2014} and in particular the application of statistical methods to EIT is now well-understood \citep{Kaipio1999,Kaipio2000,Watzenig2009,Dunlop2015,Yan2015,Aykroyd2015,Stuart2016} with sophisticated computational methods proposed \citep{Kaipio2000,Vauhkonen2001,Polydorides2002,Schwab2012,Schillings2013,Beskos2015,Chen2015,Hyvoonen2015,Chen2016a,Chen2016b,Chen2016}.
In this paper, probabilistic numerical methods are proposed and investigated as a natural approach to uncertainty quantification with a computation-precision trade-off, wherein numerical error in the approximate solution of the forward problem is explicitly modelled and accounted for in a full Bayesian solution to the inverse problem of interest.
At present, the literature on probabilistic numerical methods for partial differential equations consists of \cite{Owhadi2015a,Chkrebtii2016,Owhadi2016,Owhadi2016a,Cockayne2016,Cockayne2016b,Conrad2015,Raissi2017}.
This paper goes further than the most relevant work in \cite{Cockayne2016,Cockayne2016b}, which tackled Bayesian inverse problems based on EIT with probabilistic numerical methods, in several aspects:
\begin{itemize}
\item The inversion problem herein is more challenging than the (static) problems considered in previous work, in that we aim to recover the temporal evolution of the true unknown conductivity field $a^\dagger$ based on indirect and noise-corrupted observations at a finite set of measurement times.
To address this challenge, a (descriptive, rather than mechanistic) Markovian prior model $a$ for the field is developed, which is shown to admit a filtering formulation \citep{Todescato2017}.
This permits a sequential Monte Carlo method (particle filter) to be exploited for efficient data assimilation \citep{Law2015}.
\item The filtering formulation introduces additional challenges due to the fact that numerical (discretisation) error in solution of the forward problem will be propagated through computations performed at earlier time points to later time points, as well as the possibility that numerical errors can accumulate within the computations.
A detailed empirical investigation of the computation-precision trade-off is undertaken based on the use of probabilistic numerical methods for solution of the EIT governing equations.
\item Real experimental data are analysed, generated by one of the present authors in a controlled laboratory experiment.
These data consist of 2,401 individual voltage measurements taken at discrete spatial and temporal intervals over the boundary of the vessel, and are used to demonstrate the efficacy of the approach under realistic experimental conditions.
\end{itemize}
In particular, this paper constitutes one of the first serious applications of probabilistic numerical methods to a challenging real-world problem, where proper quantification of uncertainty is crucial.

\subsection{Overview of the Paper}

The structure of the paper is as follows:
Sec. \ref{sec:methods} contains the mathematical, statistical and computational methodological development.
Sec. \ref{sec:experiments} reports our experimental results and Sec. \ref{sec:discussion} discusses their implications for further research and for future application to industrial processes.

\section{Methods} \label{sec:methods}

In Sec. \ref{subsec:abstraction} we introduce the physical model and make the inversion problem formal.
Then in Sec. \ref{subsec:Bayesian} we recall the Bayesian approach to inversion, with an extension to a time-evolving unknown parameter.
Sec. \ref{subsec:PNM} introduces probabilistic models for numerical error incurred in discretisation of the physical governing equations.
The final section, \ref{subsec:SMC} develops a sequential Monte Carlo method for efficient computation.

\subsection{Abstraction of the Inverse Problem} \label{subsec:abstraction}

The physical equations that model the measurement process are presented below, following the recent comprehensive treatment in \cite{Dunlop2015}.

\subsubsection{Set-Up}

Consider a bounded, open domain $D \subset \mathbb{R}^d$ with smooth boundary denoted $\partial D$.
Let $\bar{D} = D \cup \partial D$.
The domain represents a physical object and our parameter of interest is the conductivity field $a : \bar{D} \rightarrow \mathbb{R}$ of that object.
Here $a(x)$ denotes the conductivity at spatial location $x \in \bar{D}$.
Consider $m$ electrodes fixed to $\partial D$, the region of contact of electrode $i \in \{1,\dots,m\}$ being denoted $E_i \subset \partial D$.
A current stimulation pattern $\mathrm{I} = (I_i)_{i=1}^m \in \mathbb{R}^m$ is passed, via the electrodes, through the object.
Note that from physical conservation of current we have $\sum_{i=1}^m I_i = 0$.

The electrical potential $u : \bar{D} \rightarrow \mathbb{R}$ over the domain, induced by the current stimulation pattern, can be described by the following partial differential equation (PDE):
\begin{eqnarray}
\arraycolsep=1.4pt\def\arraystretch{1.5}
\begin{array}{rcll}
\nabla \cdot (a \nabla u) & = & 0 & \text{in } D \\
\int_{E_i} a \nabla u \cdot \mathrm{n} \mathrm{d} \sigma & = & I_i &  \\
u & = & U_i & \text{on } E_i  \\
a \nabla u \cdot \mathrm{n} & = & 0 & \text{on } \partial D \setminus \cup_{i=1}^m E_i . 
\end{array} \label{PDE}
\end{eqnarray}
Here $\mathrm{n}$ is the \emph{outward} unit normal, which corresponds to the convention that $I_i > 0$ refers to current flow \emph{out} of the domain, and $\mathrm{d}\sigma$ represents an infinitesimal boundary element.
The quantities $U_i$ on the electrodes $E_i$ will constitute the measurements.
Known as the \emph{complete electrode model} (CEM), this PDE\footnote{The mathematical formulation in Eqn. \ref{PDE} assumes, as we do in this work, that contact impedance at the electrodes can be neglected. For the case of imperfect electrodes, the reader is referred to \cite{Aykroyd2018}.} was first studied in \cite{Cheng1989}.
For a suitable fixed field $a$, existence of a solution $u$ is guaranteed and, under the additional condition that $\sum_{i=1}^m U_i = 0$, uniqueness of the solution $u$ can also be established \citep{Somersalo1992}.
Thus the forward problem is well-defined.

The true conductivity field $a^\dagger$ is considered to be unknown and is the object of interest.
In contrast to most work on EIT, in our context $a^\dagger$ is time-dependent and we extend the notation as $a^\dagger(x,t)$ for, with no loss in generality, a time index $t \in [0,1]$.
In order to estimate $a^\dagger$, measurements $\mathrm{y}_{j,k}$ are obtained under distinct stimulation patterns $\mathrm{I}_j \in \mathbb{R}^m$, $j = 1,\dots,J$, modelled as
\begin{eqnarray} 
\mathrm{y}_{j,k} & = & \mathcal{P}_{j,k} u^\dagger + \epsilon_{j,k} \in \mathbb{R}^m \label{measurement model}
\end{eqnarray}
where the projections
\begin{eqnarray*} 
\mathcal{P}_{j,k} u^\dagger & := & \left[ \begin{array}{c} u(x_1^E ; \mathrm{I}_j , a^\dagger(\cdot,t_k) ) \\ \vdots \\ u(x_m^E ; \mathrm{I}_j , a^\dagger(\cdot,t_k) ) \end{array} \right]
\end{eqnarray*}
are defined for each stimulation pattern $j$ and each discrete time point $t_k$, $k = 1,\dots,n$, and the $\epsilon_{j,k}$ represent error in the measurement.
Here $u( \cdot ; \mathrm{I} , a )$ denotes the solution of the PDE with conductivity field $a$ and stimulation pattern $\mathrm{I}$, while $x_i^E$ is a point central to the electrode $E_i$.
Thus $u^\dagger = u(\cdot ; \mathrm{I} , a^\dagger)$ is the solution of the PDE defined by the true field $a^\dagger$ and, for fixed $j,k$, the vector $\mathcal{P}_{j,k} u^\dagger$ contains the quantities $U_i$ in the CEM with conductivity field $a^\dagger(\cdot,t_k)$ and stimulation pattern $\mathrm{I} = \mathrm{I}_j$.

In the absence of further conditions on $a^\dagger$, the inverse problem is ill-posed.
Indeed, the infinite-dimensional field $a^\dagger$ cannot be uniquely recovered from a finite dataset.
(Recall the seminal work of \cite{Hadamard1902}, who defined an inverse problem to be \emph{well-posed} if (i) a solution exists, (ii) the solution is unique, and (iii) the solution varies continuously as the data are varied.)
To proceed, the inverse problem must therefore be regularised \citep{Tikhonov1977}.

\subsection{The Bayesian Approach to Inversion} \label{subsec:Bayesian}

In this section we exploit Bayesian methods to regularise the inverse problem \citep{Stuart2010}.
Sec. \ref{prior model setup} introduces the prior model, Sec. \ref{filtering sec} casts posterior computation as a filtering problem and Sec. \ref{error bounds overview} reviews mathematical analysis for numerical approximation of the posterior that accounts for numerical error in the PDE solution method.

\subsubsection{Prior Model for the Conductivity Field} \label{prior model setup}

In this paper we interpret Eqn. \ref{PDE} in the strong form, which in particular requires the existence of $\nabla a^\dagger$ on $D$.
This information will be encoded into a prior distribution:
Let $\{\phi_i\}_{i=1}^\infty$ be an orthonormal basis for a separable Hilbert space $H$ with norm $\|\cdot\|_{H}$.
It is assumed that $H \subset C^1(\bar{D})$, where $C^m(S)$ is used to denote the set of $m$-times continuously differentiable functions from $S$ to $\mathbb{R}$.
Let $(\Omega,\mathcal{F},\mathbb{P})$ be a probability space and for measurable $v : \Omega \rightarrow \mathbb{R}$ denote $\mathbb{E} v = \int v \mathrm{d} \mathbb{P}$.

\begin{assumption} \label{prior model for a}
Let $\alpha > 1/2$ and $\omega \in \Omega$.
Our prior model is expressed as a separable Karhounen-Lo\'{e}ve expansion:
\begin{eqnarray*}
\log \; a(x,t; \omega) & = & \sum_{i=1}^\infty i^{- \alpha} \psi_i(t ; \omega) \phi_i(x) 
\end{eqnarray*}
where the $\omega \mapsto \psi_i(\cdot ; \omega)$ are modelled as independent Gaussian processes with mean functions $m_{\psi,i}$ and covariance functions $k_{\psi,i}$ such that 
\begin{eqnarray*}
m_\psi^{\max} \; := \; \sup_{i \in \mathbb{N}} \sup_{t \in [0,1]} |m_{\psi,i}(t)| < \infty, \hspace{30pt}
k_\psi^{\max} \; := \; \sup_{i \in \mathbb{N}} \sup_{t \in [0,1]} k_{\psi,i} (t,t) < \infty .
\end{eqnarray*}
\end{assumption}
The logarithm is used to ensure positivity of the conductivity field, as is considered standard in Bayesian approaches to (static) EIT \citep{Dunlop2015}.
Henceforth the probability argument $\omega \in \Omega$ will be left implicit.

This prior construction ensures that $\nabla a$ exists in $D$.
To see this, we have the following result:
\begin{proposition} \label{prop: is cts}
For fixed $t \in [0,1]$, almost surely $a(\cdot,t)$ exists in $C^1(\bar{D})$.
\end{proposition}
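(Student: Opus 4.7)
The plan is to exploit the continuous embedding $H \hookrightarrow C^1(\bar{D})$ together with Parseval's identity, showing that the Karhunen–Lo\`eve series defining $\log a(\cdot,t)$ converges in $H$ almost surely, and then transferring this regularity to $a$ itself via composition with $\exp$.

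First I would argue that the set-theoretic inclusion $H \subset C^1(\bar{D})$ is in fact a continuous embedding of Banach spaces. Under the standard hypothesis that the identity map has closed graph (which is immediate once one observes that both spaces embed continuously into $C(\bar{D})$), the closed graph theorem produces a constant $C < \infty$ with $\|\phi\|_{C^1(\bar{D})} \leq C \|\phi\|_{H}$ for every $\phi \in H$. Consequently any convergent series in $H$ is automatically convergent in $C^1(\bar{D})$, and it suffices to prove that the prior expansion converges in $H$ almost surely.

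Because $\{\phi_i\}$ is an orthonormal basis of $H$, Parseval reduces the $H$-convergence of $\sum_i i^{-\alpha} \psi_i(t;\omega)\phi_i$ to the summability condition $\sum_i i^{-2\alpha}\psi_i(t;\omega)^2 < \infty$. I would establish this by computing the expectation and appealing to Tonelli, using the uniform second-moment bound $\mathbb{E}[\psi_i(t)^2] = m_{\psi,i}(t)^2 + k_{\psi,i}(t,t) \leq (m_{\psi}^{\max})^2 + k_{\psi}^{\max}$ supplied by Model Assumption~1:
\[
\mathbb{E}\sum_{i=1}^\infty i^{-2\alpha}\psi_i(t)^2 \;\leq\; \bigl((m_\psi^{\max})^2 + k_\psi^{\max}\bigr)\sum_{i=1}^\infty i^{-2\alpha} \;<\; \infty,
\]
where the final inequality is precisely where the hypothesis $\alpha > 1/2$ is invoked. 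A non-negative random variable with finite expectation is almost surely finite, so $\log a(\cdot,t)$ is well defined as an element of $H$ on an event of full measure, and hence lies in $C^1(\bar{D})$ on that event.

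To close, I would observe that $a(\cdot,t) = \exp\bigl(\log a(\cdot,t)\bigr)$ almost surely. Since $\exp: \mathbb{R} \to (0,\infty)$ is smooth, the composition of a $C^1$ function on $\bar{D}$ with $\exp$ is again in $C^1(\bar{D})$ by the chain rule, yielding the claim. The only step I expect to require any real care is the verification that $H \hookrightarrow C^1(\bar{D})$ is continuous rather than merely set-theoretic; once that is in hand, everything else is a brief moment computation driven by orthonormality of $\{\phi_i\}$ and the uniform bounds on the Gaussian process hyperparameters.
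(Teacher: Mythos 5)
Your proof is correct and follows essentially the same route as the paper's: the decisive step in both is the moment bound $\mathbb{E}\sum_i i^{-2\alpha}\psi_i(t)^2 \le \bigl((m_\psi^{\max})^2 + k_\psi^{\max}\bigr)\sum_i i^{-2\alpha} < \infty$ for $\alpha > 1/2$, which the paper packages as Cauchy-ness of the partial sums in the Bochner space $L_H^2$ and you package as almost-sure summability via Tonelli. The closed-graph digression is unnecessary, since the set-theoretic inclusion $H \subset C^1(\bar{D})$ assumed in the text already transfers membership of the $H$-limit to $C^1(\bar{D})$.
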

Note that, in particular, this result justifies point evaluation of $\nabla a$ in the algorithms that we present; since $x \mapsto \nabla a(x,t)$ is almost surely continuous, such point evaluations are almost surely well-defined.
All proofs are reserved for Appendix \ref{app: proofs}.

This paper imparts weak prior assumptions, in the sense of smoothness, on the time-evolution of the random field:

\begin{assumption}
The $\psi_i$ are modelled as Brownian with mean functions $m_{\psi,i}(t) = 0$ and covariance functions $k_{\psi,i}(t,t') = \lambda \min(t + \tau,t' + \tau)$, for all $t, t' \in [0,1]$, for some fixed $\lambda > 0$ and $\tau \geq 0$.
\label{Markov assumption}
\end{assumption}
This prior model allows for flexible and data-driven estimation of the temporal evolution of the unknown conductivity field.
At the same time, this choice allows estimation to be cast as a filtering problem (see Sec. \ref{filtering sec}) due to the following important fact:

\begin{proposition}[Due to \cite{Wiener1949}] \label{field increments}
The increments $\psi_i(t+s) - \psi_i(t)$ are independent with distribution $\mathrm{N}(0,\lambda (s + \tau))$, for all $0 \leq t \leq t + s \leq 1$.
\end{proposition}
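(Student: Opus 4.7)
The plan is to exploit the fact that $\psi_i$ is a Gaussian process with the prescribed mean function (zero) and covariance function, so that (i) any finite collection of increments is jointly Gaussian, (ii) independence of these increments is equivalent to vanishing pairwise covariances, and (iii) each marginal is pinned down by its mean and variance. The whole argument is then a computation with the kernel $k_{\psi,i}(t,t') = \lambda \min(t + \tau, t' + \tau)$.

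First I would note that every increment $\psi_i(t+s) - \psi_i(t)$ is a bounded linear functional of $\psi_i$, so any finite vector of such increments is multivariate Gaussian (this is the standard closure property of Gaussian processes under linear maps). Since $m_{\psi,i} \equiv 0$ by Model Assumption~\ref{Markov assumption}, each increment has mean zero by linearity of expectation.

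Second, for the marginal distribution of a single increment I would substitute the covariance kernel into the bilinear identity
\[
\mathrm{Var}\bigl(\psi_i(t+s) - \psi_i(t)\bigr) = k_{\psi,i}(t+s, t+s) - 2\,k_{\psi,i}(t+s, t) + k_{\psi,i}(t, t),
\]
using $0 \leq t \leq t+s$ to resolve each $\min$, which yields the stated Gaussian variance. Then, for independence, I would pick two disjoint increments over $t_1 < t_2 \leq t_3 < t_4$ and expand their covariance as
\[
k_{\psi,i}(t_2, t_4) - k_{\psi,i}(t_2, t_3) - k_{\psi,i}(t_1, t_4) + k_{\psi,i}(t_1, t_3);
\]
the non-overlap ordering $t_2 \leq t_3$ forces each $\min$ to pick the smaller argument (plus $\tau$), and the four terms telescope to zero. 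Joint Gaussianity together with vanishing pairwise covariance delivers mutual independence for any finite family of non-overlapping increments.

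The result is essentially Wiener's classical characterisation of Brownian motion through its covariance kernel, so I do not expect any genuine obstacle. The only thing that requires care is the case-analysis of the $\min$-terms in the four-term expansion and the observation that the constant shift $\tau$ cancels pairwise, leaving the telescoping structure intact. This is routine bookkeeping rather than a deep step.
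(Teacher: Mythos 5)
The paper offers no proof of this proposition --- it is cited as a classical fact due to Wiener --- so there is no ``paper route'' to compare against; your direct verification from the covariance kernel is the natural approach, and the independence half of your argument is sound: for disjoint increments over $t_1 < t_2 \le t_3 < t_4$ the four-term expansion does telescope to zero, since every $\min$ resolves to the smaller time plus $\tau$, and joint Gaussianity upgrades vanishing covariance to mutual independence.

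The gap is in the variance step, which you assert ``yields the stated Gaussian variance'' without carrying it out. It does not. Substituting $k_{\psi,i}(t,t') = \lambda \min(t+\tau, t'+\tau)$ into
\[
\mathrm{Var}\bigl(\psi_i(t+s) - \psi_i(t)\bigr) = k_{\psi,i}(t+s,t+s) - 2\,k_{\psi,i}(t+s,t) + k_{\psi,i}(t,t)
\]
gives $\lambda(t+s+\tau) - 2\lambda(t+\tau) + \lambda(t+\tau) = \lambda s$: the shifts by $\tau$ cancel here for exactly the same reason they cancel in your independence computation, so they cannot survive into the increment variance. Under Model Assumption~\ref{Markov assumption} as written, the increment law is therefore $\mathrm{N}(0,\lambda s)$, not $\mathrm{N}(0,\lambda(s+\tau))$; the parameter $\tau$ only inflates the single-time marginal variance, $\mathrm{Var}(\psi_i(t)) = \lambda(t+\tau)$. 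Either the proposition (together with the downstream expressions for $\Gamma_s$ and the proof of Corollary~\ref{field increment corr}, which use $\sqrt{\lambda(s+\tau)}$) carries an error, or the covariance kernel is not the one intended; in any case, a careful execution of your own plan would have surfaced this discrepancy rather than confirmed the statement. The remaining bookkeeping is indeed routine.
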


An immediate consequence is that the field $a(\cdot,t)$ itself is a Markov process.
To see this, let
\begin{eqnarray*}
k_\phi(x,x') = \sum_{i=1}^\infty i^{-2\alpha} \phi_i(x) \phi_i(x') .
\end{eqnarray*}
For convenience, we let $\theta = \log a$ in the sequel. 
Then:

\begin{corollary} \label{field increment corr}
The increments $\theta_\Delta(\cdot) := \theta(\cdot,t+s) - \theta(\cdot,t)$ are independent Gaussian random fields with mean function $m_\Delta(x) = 0$ and covariance function $k_\Delta(x,x') =  \lambda (s + \tau) k_\phi(x,x')$, for all $x,x' \in D$ and all $0 \leq t \leq t + s \leq 1$. 
\end{corollary}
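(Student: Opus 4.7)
The plan is to substitute the Karhunen--Lo\`eve expansion of $\theta = \log a$ from Model Assumption 1 directly into the definition of the increment field and then read off the mean, covariance, and independence structure using Proposition 2 together with the assumed cross-$i$ independence of the processes $\psi_i$. Since Proposition 1 already guarantees $a(\cdot,t)\in C^1(\bar D)$ almost surely, the quantity $\theta_\Delta(x) = \theta(x,t+s)-\theta(x,t)$ is almost surely well-defined pointwise, so there is no issue in interpreting the statement.

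First I would write
\begin{equation*}
\theta_\Delta(x) \;=\; \sum_{i=1}^\infty i^{-\alpha} \Delta_i \, \phi_i(x), \qquad \Delta_i := \psi_i(t+s)-\psi_i(t),
\end{equation*}
and observe that, by Model Assumption 1, the $\psi_i$ are independent, so the $\Delta_i$ are jointly independent; by Proposition 2, each $\Delta_i \sim \mathrm{N}(0,\lambda(s+\tau))$. Finite partial sums $\sum_{i=1}^N i^{-\alpha}\Delta_i\phi_i(x)$ are Gaussian as linear combinations of independent Gaussians, and convergence of these sums (inherited from the arguments used for Proposition 1) gives a Gaussian random field in the limit. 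The zero-mean property is immediate from $\mathbb{E}\Delta_i = 0$.

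For the covariance, I would exploit the independence of the $\Delta_i$ to collapse the double sum to a diagonal one:
\begin{equation*}
\mathrm{Cov}\bigl(\theta_\Delta(x),\theta_\Delta(x')\bigr) \;=\; \sum_{i,j} (ij)^{-\alpha} \,\mathrm{Cov}(\Delta_i,\Delta_j)\,\phi_i(x)\phi_j(x') \;=\; \lambda(s+\tau)\sum_{i=1}^\infty i^{-2\alpha} \phi_i(x)\phi_i(x'),
\end{equation*}
which is exactly $\lambda(s+\tau)\, k_\phi(x,x')$. For the independence claim across disjoint time intervals, for each fixed $i$ Proposition 2 gives independence of the Brownian increments $\Delta_i^{(1)}$ and $\Delta_i^{(2)}$ over non-overlapping intervals, and cross-$i$ independence then implies that the jointly Gaussian pair $(\theta_{\Delta^{(1)}},\theta_{\Delta^{(2)}})$ has vanishing cross-covariance, hence is independent.

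The main obstacle is not any of the finite-dimensional manipulations, which are routine, but making the passage to the infinite series rigorous: one must justify that the tail of the KL expansion is almost surely summable (so that $\theta_\Delta$ is a well-defined random element), that the Gaussian property survives the limit, and that the covariance series converges and yields $k_\phi$. These convergence issues are controlled by the summability condition $\alpha>1/2$ built into Model Assumption 1, together with the uniform bounds on $k_{\psi,i}$, and are essentially already handled in the proof of Proposition 1; the corollary is then a direct consequence of linearity, independence, and Proposition 2.
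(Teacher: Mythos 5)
Your proposal is correct and follows essentially the same route as the paper: substitute the Karhunen--Lo\`{e}ve expansion into the increment, use Proposition \ref{field increments} to identify each coefficient increment as an independent $\mathrm{N}(0,\lambda(s+\tau))$ variable, and read off the zero mean and the covariance $\lambda(s+\tau)k_\phi(x,x')$ from the resulting series. The paper's proof is terser --- it rewrites the series as $\sqrt{\lambda(s+\tau)}\sum_i i^{-\alpha}\xi_i\phi_i(x)$ and appeals directly to the Karhunen--Lo\`{e}ve theorem --- but your explicit covariance collapse and your remark on cross-interval independence are consistent with, and slightly more complete than, the argument given there.
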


Let $\Gamma_s$ denote the distribution of the increment $\theta_\Delta$ over the time interval $[t,t+s]$.
The infinite-dimensional nature of the random variable $\theta_\Delta$ precludes the use of standard density notation, due to the non-existence of a Lebesgue measure in the infinite-dimensional context \citep[p.143][]{Yamasaki1985}. 
Instead, the distribution $\Gamma_s$ is formalised through its Radon-Nikodym derivative  
\begin{eqnarray*}
\frac{\mathrm{d}\Gamma_s}{\mathrm{d}\gamma} (\theta_\Delta) & \propto & \exp\left( -\frac{1}{2} \|\theta_\Delta\|_{k_\Delta}^2 \right) \; = \; \exp\left( -\frac{1}{2 \lambda (s + \tau)} \|\theta_\Delta\|_{k_\phi}^2 \right)
\end{eqnarray*}
with respect to abstract Wiener measure $\gamma$ \citep{Gross1967}, where $\|\cdot\|_k$ denotes the Cameron-Martin norm based on the covariance function $k$.
The reader unfamiliar with Radon-Nikodym notation is referred to the accessible introduction in \cite{Halmos1949}.

\subsubsection{Formulation as a Filtering Problem} \label{filtering sec}

Denote $\theta_k = \theta(\cdot,t_k)$.
Then the directed acyclic graph representation of the conditional independence structure of the statistical model \citep{Lauritzen1996} is as follows:
\begin{eqnarray*}
\begin{array}{ccccccccc}
\theta_1 & \rightarrow & \theta_2 & \rightarrow & \dots & \rightarrow & \theta_{n-1} & \rightarrow & \theta_n \\
\downarrow & & \downarrow & & & & \downarrow & & \downarrow \\
\mathrm{y}_{\cdot,1} & & \mathrm{y}_{\cdot,2} & & \dots & & \mathrm{y}_{\cdot,n-1} & & \mathrm{y}_{\cdot,n}
\end{array}
\end{eqnarray*}
Let $\Pi_n$ represent the posterior distribution over the conductivity field based on the data $\mathrm{y}_{\cdot,k}$ for $k \leq n$.
Then statistical inference is naturally formulated as a filtering problem at linear cost \citep{Saerkkae2013,Todescato2017}:
\begin{eqnarray*}
\frac{\mathrm{d}\Pi_n}{\mathrm{d}\Pi_{n-1}}(\theta) & \propto & p(y_{\cdot,n} | \mathrm{I} , \theta_n ).
\end{eqnarray*}
Here the Radon-Nikodym notation has been used on the LHS, while $p$ has the conventional interpretation as a p.d.f. with respect to Lebesgue measure, here representing the likelihood model specified by the distributional model for $\epsilon_{j,k}$ in Eqn. \ref{measurement model}.
The solution to the filtering problem is the $n$-step posterior distribution:
\begin{eqnarray*}
\frac{\mathrm{d}\Pi_n}{\mathrm{d}\Pi_{0}}(\theta) & = & \prod_{k=1}^n \frac{\mathrm{d}\Pi_k}{\mathrm{d}\Pi_{k-1}}(\theta)
\end{eqnarray*}
where the reference measure $\Pi_0$ is the prior distribution for the conductivity field given in Assumption \ref{prior model for a}.
From Prop. \ref{field increments}, the prior marginal on $(\theta_1,\dots,\theta_n)$, denoted $\Pi_{0,1:n}$, can be decomposed as follows:
\begin{eqnarray*}
\frac{\mathrm{d}\Pi_{0,1:n}}{\mathrm{d}(\gamma \times \dots \times \gamma)}(\theta_1,\dots,\theta_n) & \propto & \frac{\mathrm{d}\Pi_{0,1}}{\mathrm{d}\gamma}(\theta_1) \prod_{k=2}^n \frac{\mathrm{d}\Gamma_{t_k - t_{k-1}}}{\mathrm{d}\gamma}(\theta_k - \theta_{k-1}) ,
\end{eqnarray*}
where $\gamma \times \dots \times \gamma$ denotes the product of $n$ abstract Wiener measures and the initial distribution $\Pi_{0,1}$ is computed as
\begin{eqnarray*}
\frac{\mathrm{d}\Pi_{0,1}}{\mathrm{d}\gamma} (\theta_1) & \propto & \exp\left( - \frac{1}{2 \lambda (t_1 + \tau)} \|\theta_1\|_{k_\phi}^2 \right) 
\end{eqnarray*}
as a direct consequence of Assumption \ref{Markov assumption}.
Later we use $\Pi_{n,n+1}$ to denote the marginal of $\Pi_n$ over the components $\theta(\cdot,t_{n+1})$; the so-called \emph{posterior predictive} distribution.
This is simply a convolution of $\Pi_n$ with the centred Gaussian field described in Cor. \ref{field increment corr} and is of industrial relevance since it allows anticipation of the future dynamics and thus for intelligent hazard control.

\subsubsection{Numerical Error and its Analysis} \label{error bounds overview}

The likelihood model in Eqn. \ref{measurement model} depends on the projections $\mathcal{P}_{j,k} u$ which in turn depend on the exact solution $u(x_i^E ; \mathrm{I}_j , a(\cdot,t_k) )$ of the PDE for given inputs $\mathrm{I}_j$ and $a(\cdot,t_k)$.
In general the exact solution of the PDE is unavailable in closed-form and numerical methods are used to obtain discrete approximations, for instance based on a finite element or collocation basis \citep{Quarteroni2008}. 
The assessment of the error introduced through discretisation is well-studied, with sophisticated theories for worst-case and average-case errors and beyond \citep{Novak2008,Novak2010b}.

Several papers have leveraged these analyses to consider the impact of discretisation error in the forward problem on the inferences that are made for the inverse problem
\citep{Schwab2012,Schillings2013,Schillings2014,Nouy2014,Bui-Thanh2014,Chen2015,Chen2016,Chen2016a,Nagel2016}.
These analyses all focus on static inverse problems (i.e. for a single time point).
However, the generalisation of these theoretical results to the temporal context introduces considerable technical difficulties.
Indeed, the filtering formulation is such that error in an approximation of $\Pi_1$ will be propagated and lead to an error in the approximation of $\Pi_n$ whenever $n \geq 2$. 
Numerical approximation of $\Pi_n$ thus involves $n-1$ sources of discretisation error and analysis in the time-evolving setting must account for propagation and accumulation of these discretisation errors.
However, standard worst-case error analyses (such as those listed above) are inappropriate for temporal problems, since in general the worst-case scenario will not be realised simultaneously by all numerical methods involved in the computational work-flow.

Presented with such an inherently challenging problem, our novel approach - described in the next section - to \emph{model} discretisation error as an unknown random variable and propagate \emph{uncertainty due to discretisation} through computation has appeal on philosophical, technical and practical levels.

\subsection{Probabilistic Numerical Methods} \label{subsec:PNM}

Recall that the exact solution $u(\cdot ; a , \mathrm{I})$ to the PDE is unavailable in closed-form.
In this section we view numerical solution of Eqn. \ref{PDE} not as a forward problem, but as an inverse problem in its own right (called a \emph{sub-inverse} problem in this work) and provide full quantification of solution uncertainty that arises from the discretisation of this PDE via a collocation-type method.
Sec. \ref{prior for PDE} introduces a prior model for $u$ while Sec. \ref{PMM sec} completes the specification of this sub-inverse problem associated with solution of the PDE.
Then, Sec. \ref{marginal like sec} demonstrates how solution uncertainty can be propagated through the original inverse problem by marginalisation over the unknown exact solution $u$ of the PDE.
Sec. \ref{theory sec} establishes theoretical properties of the proposed method.

\subsubsection{Prior Model for the Potential Field} \label{prior for PDE}

In this section we again adopt Bayesian methods to make the sub-inverse problem well-posed.
The chief task is to construct a prior for $u$, the potential field.
In principle, the physical governing equations, together with the prior for the conductivity field $a$, induce a unique prior for the potential field.
The relationship between these probabilities has been explored in the context of stochastic PDEs; see \cite{Lord2014} for a book-length treatment.
However, the task of characterising (or even approximating) the implied distribution on $u$ is highly non-trivial\footnote{In principle this is characterised by the Green's function of the PDE, but if the Green's function was known we would not have needed to discretise the PDE.}.
For this reason, we follow \cite{Cockayne2016,Cockayne2016b} and treat the two unknown fields as independent under the prior model.
In particular, we encode independence across time points into the prior model for $u$, a choice that is algorithmically convenient.
This allows us, in the following, to leave the time index implicit.
This has a natural statistical interpretation of encoding only partial information into the prior - and can be both statistically and pragmatically justified \citep{Potter1983}.

To reduce notation in this and the following section, we consider a fixed conductivity field $a \in C^1(\bar{D})$ and a fixed current stimulation pattern $\mathrm{I} \in \mathbb{R}^m$; these will each be left implicit.

\begin{assumption}
\label{two derivatives}
The unknown solution $u$ to Eqn. \ref{PDE} is modelled as a Gaussian process with mean function $m_u(x) = 0$ and covariance function 
\begin{eqnarray}
k_u(x,x') & = & \int k_u^0(x,z) k_u^0(z,x') \mathrm{d}z \label{int kernel}
\end{eqnarray}
such that $k_u^0 \in C^{2 \times 2}(\bar{D} \times \bar{D})$ is a positive-definite kernel.
\end{assumption}

This minimal assumption ensures that, \emph{under the prior}, the differential $\nabla \cdot( a \nabla u)$ is well-defined over $D$.
Indeed, in general:
\begin{proposition} \label{prior well defined for u}
If $k_u^0 \in C^{\beta \times \beta}(\bar{D} \times \bar{D})$ with $\beta \in \mathbb{N}$, then almost surely $u \in C^\beta(\bar{D})$. 
\end{proposition}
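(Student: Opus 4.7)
The plan is to represent $u$ as a Wiener integral against Gaussian white noise, exploiting the factorised form of $k_u$, and then transfer smoothness from $k_u^0$ to sample paths by differentiating under the stochastic integral and invoking a continuity criterion for Gaussian fields on each derivative process.

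First I would introduce a Gaussian white noise $W$ on $L^2(D)$ and set $u(x) := \int_D k_u^0(x,z)\,dW(z)$. The It\^o--Wiener isometry, combined with symmetry of the positive-definite kernel $k_u^0$, yields $\mathbb{E}[u(x)u(x')] = \int_D k_u^0(x,z)\,k_u^0(z,x')\,dz = k_u(x,x')$, so this stochastic integral realises a version of the prior process. The hypothesis $k_u^0 \in C^{\beta\times\beta}(\bar D \times \bar D)$ ensures that for every multi-index $\alpha$ with $|\alpha|\le\beta$ the map $z\mapsto \partial^\alpha_x k_u^0(x,z)$ lies in $L^2(D)$ uniformly in $x$, so one can define candidate derivative fields
\[
u^{(\alpha)}(x) \;:=\; \int_D \partial^\alpha_x k_u^0(x,z)\,dW(z),
\]
which are centred Gaussian processes with incremental variance
\[
\mathbb{E}\bigl|u^{(\alpha)}(x) - u^{(\alpha)}(x')\bigr|^2 \;=\; \int_D \bigl|\partial^\alpha_x k_u^0(x,z) - \partial^\alpha_x k_u^0(x',z)\bigr|^2 dz.
\]

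Next I would show that each $u^{(\alpha)}$ admits a continuous modification on $\bar D$. By uniform continuity of $\partial^\alpha_x k_u^0$ on the compact set $\bar D \times \bar D$, the displayed incremental variance is controlled by (the square of) a modulus of continuity that vanishes at zero; Gaussianity promotes this $L^2$ estimate to $L^p$ estimates for every $p$, so the Kolmogorov--Chentsov criterion (equivalently, Dudley's metric-entropy theorem for Gaussian fields) yields a continuous version of $u^{(\alpha)}$. Finally I would identify $u^{(\alpha)}$ with the genuine pathwise partial derivative of $u$: for $\alpha = e_i$ the identity
\[
\tfrac{1}{h}\bigl[u(x+h e_i) - u(x)\bigr] \;=\; \tfrac{1}{h}\int_0^h u^{(e_i)}(x + s e_i)\,ds
\]
holds almost surely (as an equality between two Wiener integrals with $L^2$ integrands), and pathwise continuity of $u^{(e_i)}$ lets one pass to the limit $h\to 0$ to obtain $\partial_{x_i} u(x) = u^{(e_i)}(x)$ pointwise. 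An induction on $|\alpha|$ followed by intersection of the (finitely many) almost-sure events for $|\alpha|\le\beta$ then delivers $u \in C^\beta(\bar D)$ almost surely.

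The hardest step I anticipate is this last identification, which converts mean-square differentiability (which the smoothness of the covariance gives almost for free) into genuine pathwise differentiability; the fundamental-theorem-of-calculus identity above is the essential bridge, and it depends crucially on the continuous modification of $u^{(e_i)}$ produced in the previous step. A secondary subtlety is that the Kolmogorov/Dudley argument implicitly requires a sufficiently fast modulus of continuity for $\partial^\beta k_u^0$; for the Mat\'ern and squared-exponential kernels used in the paper, H\"older regularity of the top derivative is automatic and the entropy integral converges without additional assumption.
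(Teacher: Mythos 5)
Your construction is entirely different from the paper's proof, which is a two-citation argument: Lemma 2.2 of Cialenco et al.\ (2012) shows that a Gaussian process with the convolution covariance $k_u(x,x') = \int k_u^0(x,z)k_u^0(z,x')\,\mathrm{d}z$ has sample paths almost surely in the reproducing kernel Hilbert space $H(k_u^0)$, and Corollary 4.36 of Steinwart and Christmann (2008) gives $H(k_u^0)\subset C^\beta(\bar D)$ whenever $k_u^0\in C^{\beta\times\beta}(\bar D\times\bar D)$. The difference is not cosmetic, because the paper's route is precisely what avoids the one genuine gap in yours.

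The gap is the continuity step for the top-order derivative fields. Under the stated hypothesis, $\partial^\alpha_x k_u^0$ with $|\alpha|=\beta$ is merely (uniformly) continuous, so all you obtain is $\mathbb{E}|u^{(\alpha)}(x)-u^{(\alpha)}(x')|^2 \le \omega(\|x-x'\|)^2$ for some modulus of continuity $\omega$ with $\omega(0+)=0$. Gaussian moment equivalence turns this into $\mathbb{E}|u^{(\alpha)}(x)-u^{(\alpha)}(x')|^p\le C_p\,\omega(\|x-x'\|)^p$, and the Kolmogorov--Chentsov criterion then demands $\omega(h)\lesssim h^{(d+\varepsilon)/p}$, i.e.\ H\"older continuity of the top derivative in its first argument; likewise Dudley's entropy integral can diverge when $\omega$ decays slowly. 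It is classical that continuity of a Gaussian covariance does not imply sample-path continuity, so this step fails under the stated hypothesis. You flag the point but call it secondary; it is in fact the crux, and as written your argument proves a strictly weaker proposition (one with an added H\"older condition on $\partial^\beta k_u^0$, satisfied by the squared-exponential kernel used in the paper but not by every $C^{\beta\times\beta}$ kernel). Note also that your argument uses the $z$-regularity of $k_u^0$ only through $L^2$ membership, whereas that regularity is exactly the information the general statement needs: once the sample path is known to be an honest element $f\in H(k_u^0)$ of finite norm, one has $\partial^\alpha f(x)=\langle f,\partial^\alpha_x k_u^0(x,\cdot)\rangle_{H(k_u^0)}$, and continuity of $\partial^\alpha f$ follows deterministically by Cauchy--Schwarz from continuity of the mixed derivative $\partial^{\alpha,\alpha}k_u^0$ near the diagonal, with no quantitative modulus required. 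Your white noise $W$ does not lie in $L^2(D)$, so the pairing $\langle g_x,W\rangle$ admits no analogous pathwise bound. The remainder of your argument --- the white-noise realisation of the prior (which correctly invokes symmetry of $k_u^0$) and the stochastic-Fubini identification of $u^{(e_i)}$ with the pathwise derivative --- is sound.
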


\subsubsection{Probabilistic Meshless Method} \label{PMM sec}

Next we obtain a posterior distribution over the solution $u$ to the PDE in Eqn. \ref{PDE}.
In particular this requires us to be explicit about the nature of our ``data'' for this sub-inverse problem.
The mathematical justification for our approach below is provided in the information-based complexity literature on linear elliptic PDEs of the form $Au = f$ on $D$, $Bu = g$ on $\partial D$ \citep{Werschulz1996,Novak2008,Cialenco2012}.
In this framework, limited data $\mathrm{f}_i = f(x_i^A)$, $\mathrm{g}_i = g(x_i^B)$ are provided on the forcing term $f$ and the boundary term $g$; the mathematical problem is then optimal recovery of the solution $u$ from these data, under a loss function that must be specified.
This is a particular example of a \emph{linear information} problem, since the $\mathrm{f}_i$ and $\mathrm{g}_i$ are linear projections of the unknown solution $u$ of interest; see \cite{Novak2008} for a book length treatment.

The data with which we work, in the above sense, are linear projections obtained at \emph{collocation} points $\{x_i^{A}\}_{i=1}^{n_A} \subset D$ and $\{x_i^{B}\}_{i=1}^{n_B} \subset \partial D$:
\begin{eqnarray*}
\begin{array}{rclcll}
\mathcal{L}_i u & := & \nabla \cdot a(x_i^{A}) \nabla u(x_i^{A}) & = & 0 & i = 1,\dots,n_A \\[8pt]
\mathcal{L}_{n_A + i} u & := & a(x_i^{B}) \nabla u(x_i^{B}) \cdot \mathrm{n}(x_i^{B}) & = & 0 & i = 1,\dots,n_B \\[8pt]
\mathcal{L}_{n_A+n_B+i} u & := & \int_{E_i} a \nabla u \cdot \mathrm{n} \mathrm{d}\sigma & = & I_i & i = 1,\dots,m.
\end{array}
\end{eqnarray*}
Here $\mathcal{L} = [\mathcal{L}_1,\dots,\mathcal{L}_{n_A+n_B+m}]$ is a linear operator from $C^2(\bar{D})$ to $\mathbb{R}^{n_A+n_B+m}$.
For a function $h(\cdot , \cdot) \in C^{2 \times 2}(\bar{D} \times \bar{D})$, in a slight abuse of notation, $\mathcal{L}h$ will be used to denote action of $\mathcal{L}$ on the first argument, while the notation $\bar{\mathcal{L}}h$ denotes action on the second argument.
The composition $\mathcal{L} \bar{\mathcal{L}}h$ is understood as a matrix with $(i,j)$th element $\mathcal{L}_i \bar{\mathcal{L}}_j h \in \mathbb{R}$.
In this notation, the data can be expressed as $\mathcal{L} u = [0^\top , \mathrm{I}^\top]^\top$ where $\mathrm{I} = [I_1,\dots,I_m]^\top$.
The posterior over $u$ is obtained by conditioning the prior measure on these data.
Recall that $\mathcal{P} u = [u(x_1^E) , \dots , u(x_m^E)]^\top$.
For our purposes, it is sufficient to obtain the posterior over the finite dimensional vector $\mathcal{P}u$:
\begin{eqnarray}
\mathcal{P} u \left| \mathcal{L}u = \left[ \begin{array}{c} 0 \\ \mathrm{I} \end{array} \right] \right. & \sim & \mathrm{N}( \mu , \Sigma ) \nonumber \\
\mu & = & [\mathcal{P} \bar{\mathcal{L}} k_u] [\mathcal{L}\bar{\mathcal{L}} k_u]^{-1} \left[ \begin{array}{c} 0 \\ \mathrm{I} \end{array} \right] \label{PMM eqns} \\
\Sigma & = & [\mathcal{P}\bar{\mathcal{P}} k_u] - [\mathcal{P} \bar{\mathcal{L}} k_u] [\mathcal{L} \bar{\mathcal{L}} k_u]^{-1} [\mathcal{L} \bar{\mathcal{P}} k_u] \nonumber
\end{eqnarray}
This distribution represents uncertainty due to the finite amount of computation that is afforded to numerical solution of the PDE in Eqn. \ref{PDE}.
Eqn. \ref{PMM eqns} was termed a \emph{probabilistic meshless method} in \cite{Cockayne2016,Cockayne2016b}.
Note that the \emph{maximum a posteriori} estimate $\mu$ is identical to the point estimate provided by symmetric collocation \citep{Fasshauer1996} and this point estimator (only) was considered in the context of Bayesian PDE-constrained inverse problems in \cite{Marzouk2009a,Yan2015}.
The point estimator $\mu$ requires that the $(n_A + n_B + m)$-dimensional square matrix $\mathcal{L} \bar{\mathcal{L}} k_u$ is inverted; since this is also the computational bottleneck in computation of $\Sigma$, it follows that the probabilistic meshless method has essentially the same computational cost as its non-probabilistic counterpart. 
Considerable theoretical advances in the numerical analysis of these probabilistic numerical methods (for static problems) have since been made in \cite{Owhadi2016}.
For non-degenerate kernels $k_u$, the matrix $\mathcal{L} \bar{\mathcal{L}} k_u$ is of full rank provided that no two collocation points are coincidental.

The selection of collocation points can be formulated as a problem of statistical experimental design.
Indeed, adaptive refinement strategies, that target an appropriate functional of the posterior covariance $\Sigma$ until a pre-specified tolerance is met, can be considered \citep[see][]{Cockayne2016}.
For brevity in this paper we simply considered the collocation points to be fixed.

\subsubsection{Marginal Likelihood} \label{marginal like sec}

The natural approach to define a data distribution is through marginalisation over the unknown solution $u$ to the PDE.
This marginalisation can be performed in closed form under a Gaussian measurement error model:
\begin{assumption}
The measurement errors $\epsilon_{j,k}$ are independent $\mathrm{N}(0,\sigma^2 I)$.
\label{independent measurements}
\end{assumption}
Consider a stimulation pattern $\mathrm{I}_j$ applied at time $t_k \in [0,1]$.
Define $P_{j,k} = \mathcal{P}u( \cdot ; \mathrm{I}_j , a(\cdot,t_k) )$ and denote by $\mu_j , \Sigma_j$ the output of the probabilistic meshless method (Eqn. \ref{PMM eqns}) for the input stimulation pattern $\mathrm{I}_j$.
Then the marginal distribution of the data $\mathrm{y}_{j,k}$, given the measurement error standard deviation $\sigma$, admits a density as follows:
\begin{eqnarray}
p^*(\mathrm{y}_{j,k} | \mathrm{I}_j , a(\cdot , t_k) , \sigma ) & = & \int \mathrm{N}( \mathrm{y}_{j,k} | P_{j,k} , \sigma^2 I ) \mathrm{N}( P_{j,k} | \mu_j , \Sigma_j ) \mathrm{d}P_{j,k} \nonumber \\
& = & \mathrm{N}( \mathrm{y}_{j,k} | \mu_j ,\sigma^2 I + \Sigma_j ) \label{PMM likelihood}
\end{eqnarray}
where we have used the shorthand of $\mathrm{N}(\cdot | \mu_j , \Sigma_j)$ for the p.d.f. of $\mathrm{N}(\mu_j , \Sigma_j)$.
Eqn. \ref{PMM likelihood} has the clear interpretation of inflating the measurement error covariance $\sigma^2 I$ by an additional amount $\Sigma_j$ to reflect additional uncertainty due to discretisation error in the numerical solution of the PDE in Eqn. \ref{PDE}.
This distinguishes the probabilistic approach from other applications of collocation methods in the solution of Bayesian PDE-constrained inverse problems, where uncertainty due to discretisation is ignored \citep{Marzouk2009a,Yan2015}.
Eqn. \ref{PMM likelihood} also appears in the emulation literature for static problems \citep[e.g.][]{Calvetti2017}.
However, emulation methods treat the PDE as a perfect black-box and, as a result, the matrices $\Sigma_j$ obtained from emulation do not reflect the fact that the PDE must be discretised\footnote{The typical usage of emulators is to reduce the total number of forward problems that must be solved.
This consideration is orthogonal to the present work and the two approaches could be combined.}.

This paper proposes to base statistical inferences on the posterior distribution $\Pi_n^*$ defined recursively via
\begin{eqnarray*}
\frac{\mathrm{d}\Pi_n^*}{\mathrm{d}\Pi_{n-1}^*}(\theta) & \propto & p^*(\mathrm{y}_{\cdot,n} | \mathrm{I} , \exp(\theta_n) , \sigma) , \qquad \Pi_0^* = \Pi_0 .
\end{eqnarray*}
In particular we will be most interested in the posterior predictive distribution $\Pi_{n,n+1}^*$ obtained with these probabilistic numerical methods, where discretisation uncertainty is explicitly modelled.
Unlike $\Pi_{n,n+1}$, the posterior predictive distribution $\Pi_{n,n+1}^*$ can be exactly computed, since it does not require the exact solution of the PDE.

\subsection{Theoretical Properties} \label{theory sec}

The theoretical analysis of \cite{Cockayne2016} can be exploited to assess the consistency of the probabilistic meshless method in Eqn. \ref{PMM eqns}, in the case where the field $a$ is fixed.
Define the \emph{fill distance} $h := \min\{h_A,h_B\}$ where $h_A = \sup_{x \in D} \min_i \|x - x_i^A \|_2$ and $h_B = \sup_{x \in \partial D} \min_i \|x - x_i^B\|_2$.
Then we outline the following result:

\begin{proposition} \label{fill dist result}
Let $B_\epsilon$ denote a Euclidean ball of radius $\epsilon > 0$ centred on $\mathcal{P}u$ in $\mathbb{R}^m$, where $u$ is the true solution of the PDE and $\mathcal{P}$ was as previously defined.
Then, under the assumptions of \cite{Cockayne2016}, which include that $H(k_u^0)$ is norm-equivalent to the Sobolev space $H^\beta(D)$, then the mass afforded to $\mathbb{R}^m \setminus B_\epsilon$ in the posterior $\mathrm{N}(\mu_j , \Sigma_j)$ is $O(\epsilon^{-2} h^{2\beta - 4 - d})$ for $h > 0$ sufficiently small.
\end{proposition}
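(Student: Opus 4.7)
The plan is to reduce the probabilistic statement to a moment bound via Chebyshev's inequality and then invoke the kernel interpolation error analysis of \cite{Cockayne2016} component-wise. Writing $V \sim \mathrm{N}(\mu_j, \Sigma_j)$ and expanding around the true projection, one has
\begin{eqnarray*}
\mathbb{P}(V \notin B_\epsilon) \; = \; \mathbb{P}\bigl(\|V - \mathcal{P}u\|_2 > \epsilon\bigr) \; \leq \; \epsilon^{-2}\, \mathbb{E}\|V - \mathcal{P}u\|_2^2 \; = \; \epsilon^{-2}\bigl( \|\mu_j - \mathcal{P}u\|_2^2 + \mathrm{tr}(\Sigma_j) \bigr) .
\end{eqnarray*}
Thus it suffices to show that both the bias term $\|\mu_j - \mathcal{P}u\|_2^2$ and the trace $\mathrm{tr}(\Sigma_j)$ are $O(h^{2\beta - 4 - d})$; since $m$ is fixed, controlling each of the $m$ diagonal entries of the bias and covariance suffices.

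First I would handle the covariance. The diagonal entries $(\Sigma_j)_{ii}$ are the posterior variances of $u(x_i^E)$ after conditioning on the linear information $\mathcal{L}u$. Since $k_u^0$ generates a native space norm-equivalent to $H^\beta(D)$ (and $k_u$ is one smoothing integration away), the standard power-function / sampling-inequality argument for kernel interpolation in Sobolev spaces, as used in \cite{Cockayne2016}, gives that $(\Sigma_j)_{ii} = O(h^{2\beta - 4 - d})$: the loss of $2$ in the exponent is exactly the order of the differential operator $\mathcal{L}_i$ for $i \leq n_A$ (the divergence-form operator being second order), and the loss of $d$ arises from the usual $L^2$-to-$L^\infty$ Sobolev embedding penalty in converting an $H^\beta$-bound into pointwise control.

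Next I would bound the bias. Because $\mu_j$ coincides with the symmetric collocation approximant, $\mu_j - \mathcal{P}u$ is precisely the pointwise evaluation error of that approximant at the electrode centres. The abstract convergence theorem of \cite{Cockayne2016}, applied to the linear information set $\mathcal{L}$, gives an $H^\beta(D)$-norm bound on $u - \tilde u$ (where $\tilde u$ is the symmetric collocation estimate) of order $h^{\beta - 2}$, and composing with point evaluation at $x_i^E$ costs another $h^{-d/2}$ via the Sobolev sampling inequality. Squaring gives $\|\mu_j - \mathcal{P}u\|_2^2 = O(h^{2\beta - 4 - d})$, matching the variance rate. Combining the two bounds yields the claimed $O(\epsilon^{-2} h^{2\beta - 4 - d})$ estimate.

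The main technical obstacle I anticipate is checking that the hypotheses of \cite{Cockayne2016} genuinely cover our operator $\mathcal{L}$. The first $n_A$ components are second-order differential operators with variable coefficient $a \in C^1(\bar D)$, the next $n_B$ components are first-order boundary operators, and the final $m$ components are integral functionals over the electrodes $E_i$ which are actually more regular than the differential constraints. The worst case (the order-two interior operator) should drive the overall rate, but one must verify: (i) that the variable coefficient $a$, which is only $C^1$, does not degrade the effective smoothness below $\beta - 2$; (ii) that the boundary integral functionals are bounded linear functionals on $H^\beta(D)$ with the right scaling so they do not worsen the rate; and (iii) that the fill distance $h$ used in \cite{Cockayne2016} coincides with the $\min\{h_A,h_B\}$ used here. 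Once these checks are in place, the rest is bookkeeping.
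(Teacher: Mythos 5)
Your proposal is correct and follows essentially the same route as the paper: both reduce the claim to a pointwise bias bound and a pointwise posterior standard-deviation bound of order $h^{\beta-2-d/2}$ (Props.~4.1--4.2 of the cited work) and then conclude via a Chebyshev-type concentration bound for the Gaussian $\mathrm{N}(\mu_j,\Sigma_j)$, which is exactly the ``generic rate of contraction'' result (Thm.~4.3) that the paper invokes as a black box. The only cosmetic difference is that the paper obtains the bias bound from the inequality $|\mu(x)-u(x)|\leq\sigma(x)\|u\|_{H(k_u)}$ rather than from a separate symmetric-collocation error estimate plus a sampling inequality, but both yield the same exponent.
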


This result ensures asymptotic agreement between the probabilistic numerical approach to the inverse problem and the (unavailable) exact approach based on the exact solution of the PDE in Eqn. \ref{PDE} in the limit $h \rightarrow 0$ of infinite computation.
Empirical evidence for the appropriateness of the uncertainty quantification for static EIT experiments and finite computation was presented in \cite{Cockayne2016}.

\subsection{Computation via Sequential Monte Carlo} \label{subsec:SMC}

The log-normal prior on the conductivity field precludes a closed-form posterior.
However, the filtering formulation of Sec. \ref{filtering sec} suggests a natural approach to computation based on particle filters, otherwise known as sequential Monte Carlo (SMC) methods \citep{DelMoral2004}.

Let $\Pi_{0,1} \ll \Pi_{0,1}'$ where $\Pi_{0,1}'$ is a user-chosen importance distribution on $C^1(\bar{D})$ (and could be $\Pi_0$).
The method begins with $N$ independent draws $\theta_0^{(1)},\dots,\theta_0^{(N)}$ from $\Pi_{0,1}'$; each draw $\theta_0^{(i)}$ is associated with an importance weight 
$$
w_0^{(i)} \propto \frac{\mathrm{d}\Pi_{0,1}}{\mathrm{d}\Pi_{0,1}'}(\theta_0^{(i)})
$$
such that $\sum_{i=1}^N w_0^{(i)} = 1$.
This provides an empirical approximation $\sum_{i=1}^N w_0^{(i)} \delta(\theta_0^{(i)})$ to the prior marginal distribution $\Pi_{0,1}$ that becomes exact as $N$ is increased.
Let $t_0 := t_1$.
Then, at each iteration $n = 1,2,\dots$ of the SMC algorithm, the following steps are performed:
\begin{enumerate}
\item {\bf Re-sample:} Particles $\tilde{\theta}_n^{(1)},\dots,\tilde{\theta}_n^{(N)}$ are generated as a random sample (with replacement) of size $N$ from the empirical distribution $\sum_{i=1}^N w_{n-1}^{(i)} \delta(\theta_{n-1}^{(i)})$.
\item {\bf Move:} Each particle $\tilde{\theta}_n^{(i)}$ is updated to $\theta_n^{(i)}$ according to a Markov transition $M_{n-1}$ that leaves $\Pi_{n-1}^*$ invariant. 
(Details are provided in Appendix \ref{ap: markov kernels}.)
\item {\bf Re-weight:} The next set of weights are defined as 
$$
w_n^{(i)} \propto p^*(\mathrm{y}_{\cdot,n} | \mathrm{I} , \exp(\theta_n^{(i)}) , \sigma)
\frac{\mathrm{d}\Gamma_{t_n - t_{n-1}}}{\mathrm{d}\gamma}(\theta_n^{(i)} - \tilde{\theta}_n^{(i)})
$$
and such that $\sum_{i=1}^N w_n^{(i)} = 1$.
\end{enumerate}
The output after $n$ iterations is an empirical approximation $\sum_{i=1}^N w_n^{(i)} \delta(\theta_n^{(i)})$ to the posterior distribution $\Pi_n^*$.
The posterior predictive distribution $\Pi_{n,n+1}^*$ can be obtained from similar methods, as
\begin{eqnarray}
\frac{\mathrm{d} \Pi_{n,n+1}^*}{\mathrm{d} \Pi_n^*}(\theta) & \propto & \frac{\mathrm{d} \Gamma_{t_{n+1} - t_n} }{\mathrm{d} \gamma}(\theta_{n+1} - \theta_n) . \label{eq: predictive posterior}
\end{eqnarray}
The re-sample step in the above procedure does not in general need to occur at each iteration, only when the effective sample size is small; see \cite{DelMoral2004}.
Theoretical analysis of SMC methods in the context of infinite-dimensional state spaces is provided in \cite{Beskos2015}.
For this work we considered a fairly standard SMC method, but several extensions are possible and include, in particular, stratified or quasi Monte Carlo re-sampling methods \citep{Gerber2015}.
One extension which we explored was to introduce fictitious intermediate distributions between $\Pi_{n-1}^*$ and $\Pi_n^*$ following \cite{Chopin2002}, which we found to improve the performance of SMC in this context.
For the experiments reported in the paper, 100 intermediate distributions were used, defined by tempering on a linear temperature ladder, c.f. \cite{Kantas2014,Beskos2015}.

This completes our methodological development.
Optimised \verb+Python+ scripts are available to reproduce these results at:
\if1\blind
{\url{https://github.com/jcockayne/hydrocyclone_code}.
} \fi
\if0\blind
{\url{blinded_url}
} \fi
Next, we report empirical results based on data from a controlled EIT experiment.

\section{Results} \label{sec:experiments}

This section considers data from a laboratory experiment designed to investigate the temporal mixing of two liquids.
The experiment was conducted by one of the present authors and carefully controlled, to enable assessment of statistical methods and to mimic the salient features of industrial hydrocyclone equipments.

\subsection{Experimental Protocol}

In the experiment, a cylindrical perspex tank of diameter 15cm and height 30cm was used with a single ring of $m=8$ electrodes, each measuring approximately 1cm wide by 3cm high.
The electrodes start at the bottom of the tank, with the initial liquid level exactly at the top of the electrodes.
Hence there is translation invariance in the vertical direction and the contents are effectively a single 2D region, meaning that electrical conductivity can be modelled as a 2D field.
The experimental set-up is depicted in Fig. \ref{fig:beaker}.

At the start of the experiment, a mixing impeller was used to create a rotational flow.
This was then removed and, after a few seconds, concentrated potassium chloride solution was carefully injected into the tap water initially filling the tank.
Data was then collected at regular time intervals until it was assumed that the liquid had fully mixed.
Further details of the experiment can be found in \cite{West2005}.
These data were previously analysed (with non-probabilistic numerical methods) in \cite{Aykroyd2007}. 

This experiment mimics the situation when a hydrocyclone moves from an in-control regime to an out-of-control regime, in that initially there is a well defined core which gradually disappears as the liquids merge together.
Performing the experiment in the laboratory allowed careful control of experimental conditions and, in particular, a lack of electrical interference from other equipment.
A similar experimental set-up for data-generation was recently employed in \cite{Hyvoonen2015}.

There are several widely accepted data collection `protocols' for EIT \citep{Isaacson1986}.
A protocol specifies the sequence of electrodes that are used to create the electric field, as well as the sequence of electrodes used to measure the resulting electric potential.
In this experiment the `reference protocol' was used, where a drive current is passed between a reference electrode and each of the other electrodes in turn allowing a maximum of $J = 7$ linearly independent current patterns.
For each current pattern, the $U_1,\dots,U_m$ were measured up to a common additive constant\footnote{This reflects the fact that it is voltage that is actually measured, which is the difference of two potentials.}, so that without loss of generality $E_1$ is the `reference' electrode and $U_1 \equiv 0$.
This permits a total of $7\times 7=49$ measurements $\mathrm{y}_{\cdot,k}$, obtained at each time point $t_k$ in the experiment.

\begin{figure}[t!]
\centering
\includegraphics[width = 0.3\textwidth]{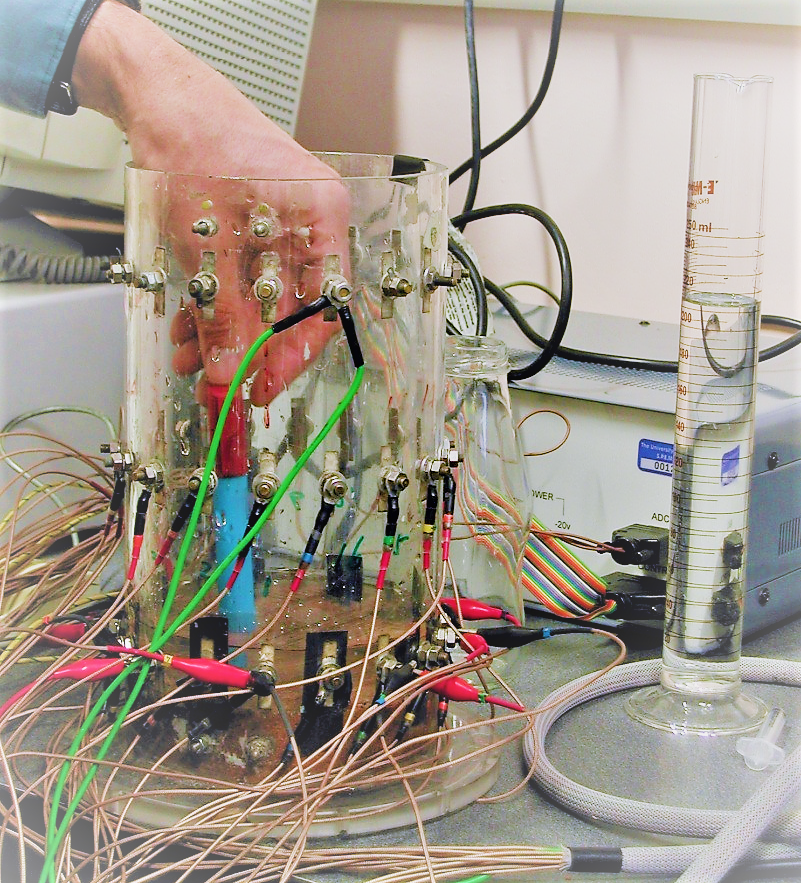}
\caption{Experimental set-up: A cylindrical perspex tank containing tap water was stirred before an amount of potassium chloride was injected.
Electrodes positioned around the tank measured voltages, which can be related through a partial differential equation to the internal conductivity field.
The inverse problem consists of estimating the internal conductivity field from the voltages that were measured.
(Only the bottom ring of electrodes were used for the data analysed in this paper.)
Photo reproduced from \cite{West2005}.}
\label{fig:beaker}
\end{figure}

\subsection{Experimental Results}

The proposed statistical approach, based on probabilistic numerical methods, was used to make inferences on the unknown conductivity field $a^\dagger$ based on this realistic experimental dataset.

The assumption that $k_u^0$ has two continuous derivatives is sufficient for the prior to be well-defined (Prop. \ref{prior well defined for u}).
However, the theoretical result in Prop. \ref{fill dist result} requires a more regular kernel with at least $\beta > 2 + d/2$ (weak) derivatives to ensure contraction of the (static) posterior.
In reality, molecular diffusion implies that clear boundaries are not expected to be present in the true conductivity field. 
Thus it is reasonable to assume that both the conductivity field $a$ and the electrical potential $u$ will be fairly smooth in the interior $D$.
For these reasons, the kernels employed for experiments below were of squared-exponential form, since this trivially meets all smoothness requirements, including smoothness of the solution $u$ in $D$.

\subsubsection{Static Recovery Problem}

First, we calibrated our probabilistic numerical methods by analysing the static recovery problem.
This prior for $\theta$ was taken to be Gaussian, with a mean of zero and a squared-exponential covariance function
\begin{equation*}
  k_a(\bm{x}, \bm{x}') := \varphi_a \exp\left( 
   -\frac{\|\bm{x}-\bm{x}'\|_2^2}{2 \ell_a^2}
  \right)
\end{equation*}
where $\varphi_a$ controls the magnitude of fields drawn from the prior, while the length-scale $\ell_a$ controls how rapidly those functions vary. 
Since the main aim here is to assess the probabilistic meshless method, rather than the performance at state estimation, we simply fixed $\varphi_a = 1$ and $\ell_a = 0.3$.
Note that while it is common in EIT problems to use priors which promote hard edges in drawn samples, owing to applications in medicine, here a smooth prior is appropriate.
For all experiments in this paper the parameter $\sigma$, that describes technical measurement error, was set to $\sigma = 1.0$ based on analysis of a technical replicate dataset.
For the probabilistic meshless method, the prior model was centered and a squared-exponential covariance function was used, with $\varphi_u=100$ to match the scale of measurements in the dataset, and $\ell_u=0.211$, a value chosen by empirical Bayes based upon a high-quality reference sample.
The collocation points were chosen on concentric circles, as shown in Fig. \ref{fig:design} for increasing values of $n_A$ and $n_B$.

\begin{figure}[t]
\centering
\begin{subfigure}{0.3\textwidth}
\includegraphics[width = \textwidth]{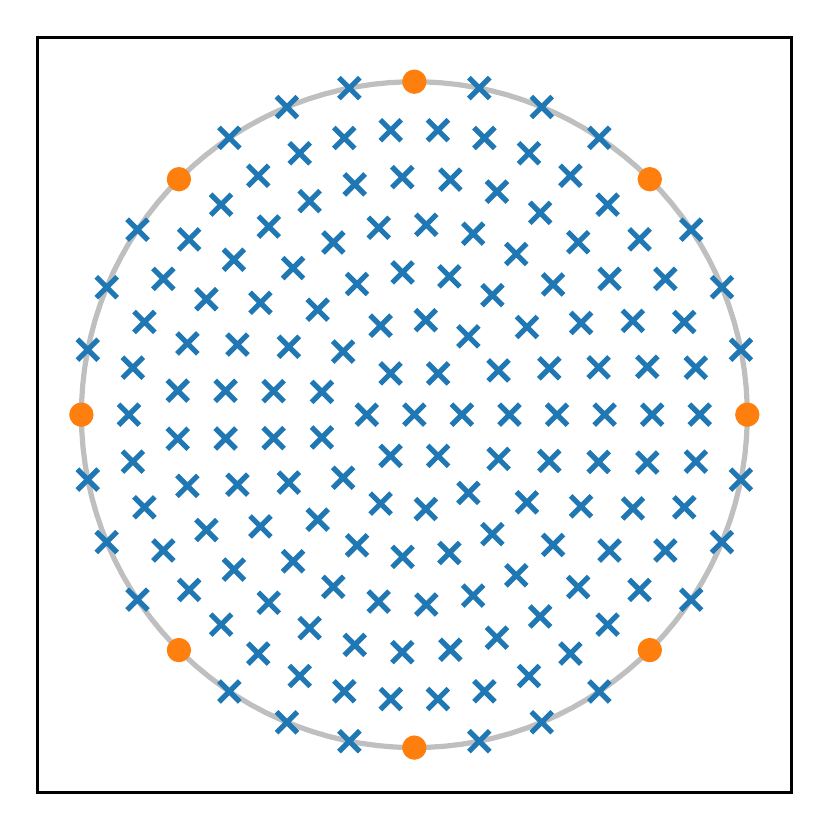}
\caption{$n_A + n_B = 165$}
\end{subfigure}
\begin{subfigure}{0.3\textwidth}
\includegraphics[width = \textwidth]{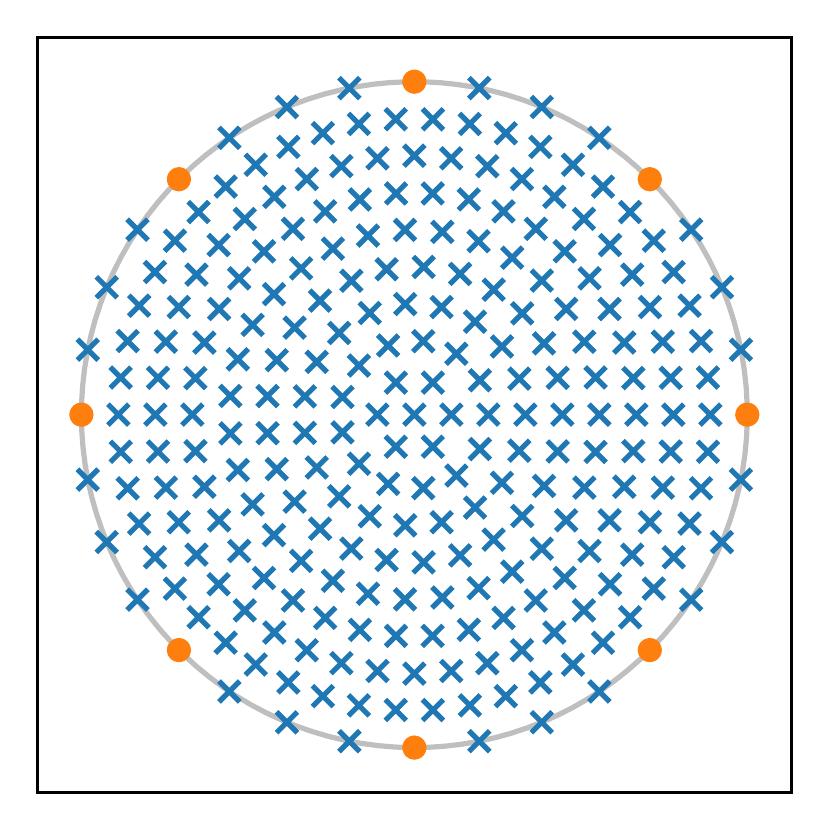}
\caption{$n_A + n_B = 259$}
\end{subfigure}
\begin{subfigure}{0.3\textwidth}
\includegraphics[width = \textwidth]{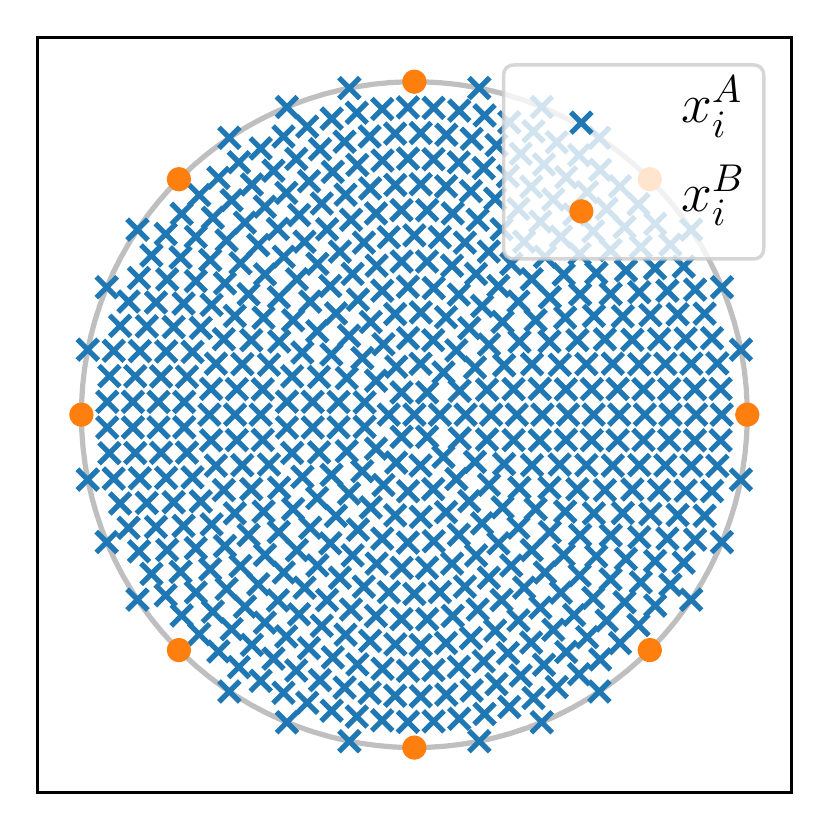}
\caption{$n_A + n_B = 523$}
\end{subfigure}
\caption{Typical sets of collocation points $x_i^A$ and $x_i^B$ that were used to discretise the PDE. }
\label{fig:design}
\end{figure}

For illustration, we first considered simulated data and a coarse collocation method which did \emph{not} model discretisation error.
This was compared to a reference posterior, obtained using a brute-force symmetric collocation forward solver with a large number of collocation points.
(Of course, it is impractical to use a large number of collocation points in the applied context due to the associated computational cost.)
The result, shown in Fig. \ref{ex fail}, was a posterior that did not contain the true data-generating field $a^\dagger$ in its region of support.
This result was observed to be typical for $n_{\mathcal{A}} + n_{\mathcal{B}} < 250$ and motivates the formal uncertainty quantification for discretisation error that is provided by probabilistic numerical methods in this paper.

\begin{figure}[t!]
\centering
\includegraphics[width = 0.5\textwidth]{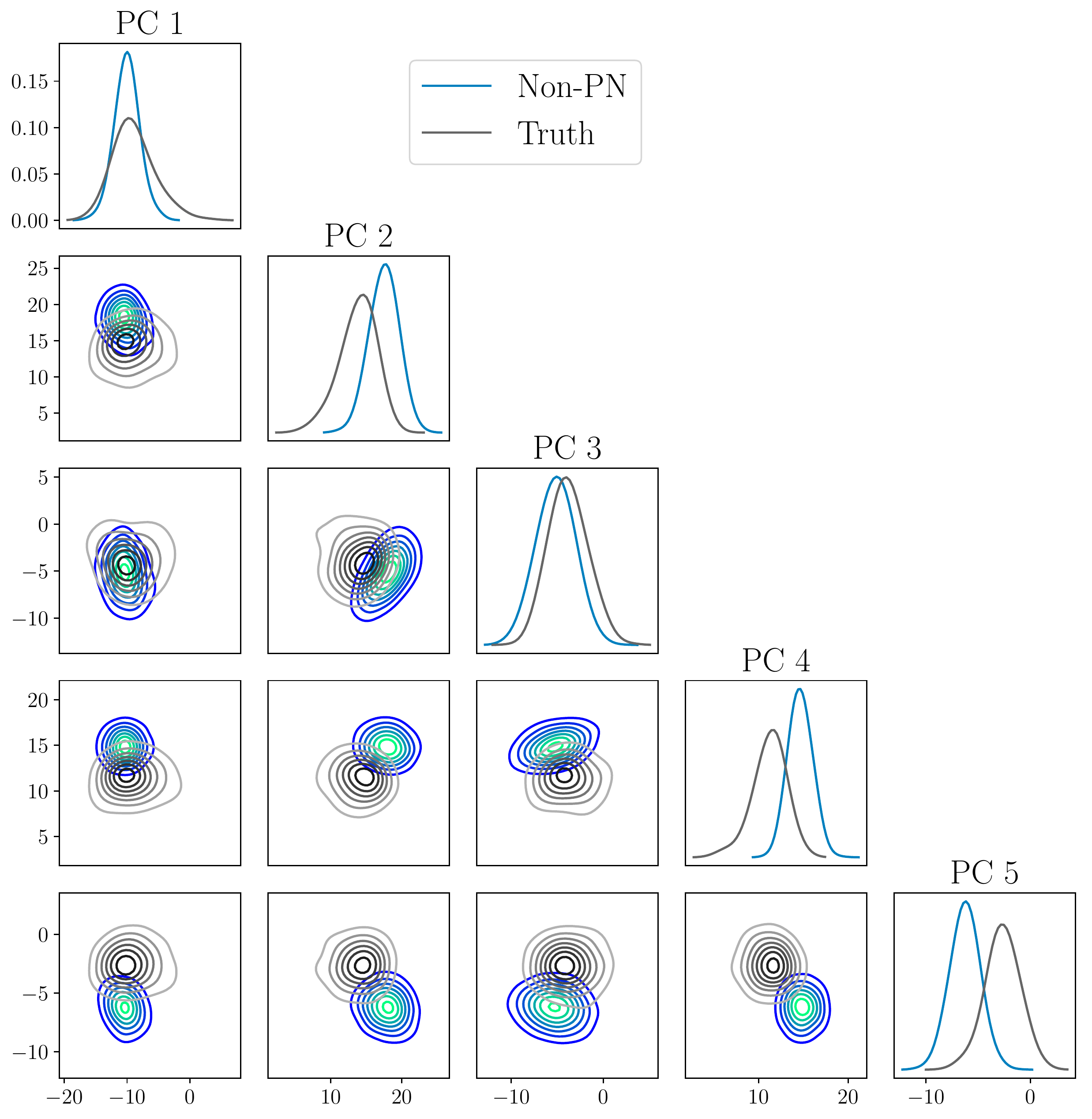}
\caption{Failure case:
Here a small number $n_A+n_B = 165$ of collocation points was used to discretise the PDE, but the uncertainty due to discretisation was not modelled.
The reference posterior distribution (grey) was compared to the approximation to the posterior obtained when the PDE is discretised (blue) and the discretisation error is not modelled (``Non-PN'').
Projections onto principal components (PC) of the reference posterior are displayed.
It is observed that the approximated posterior is highly biased. }
\label{ex fail}
\end{figure}

\begin{figure}[t!]
\centering
\includegraphics[width=\textwidth]{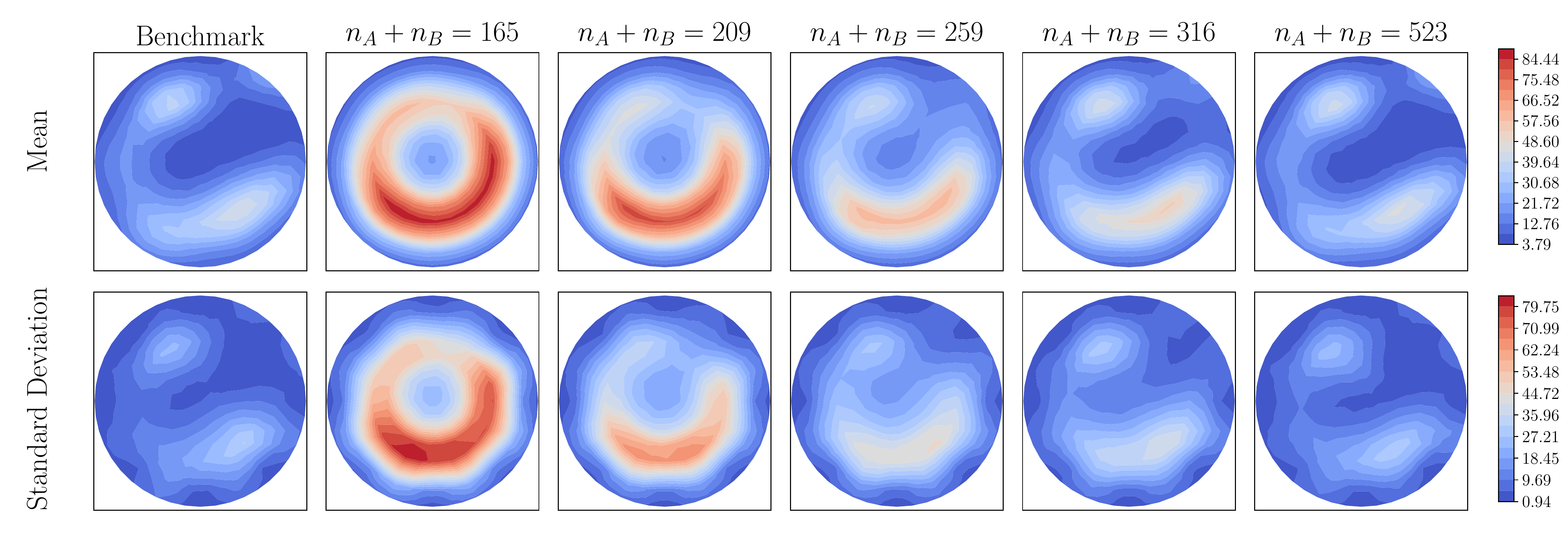}
\caption{Posterior means and standard-deviations for the recovered conductivity field.
The first column represents the reference solution, obtained using a symmetric collocation forward solver with a large number of collocation points.
The remaining columns represent the recovered field when probabilistic numerical methods are used based on $n_A+n_B$ collocation points as illustrated in Fig. \ref{fig:design}. }
\label{fig:static}
\end{figure}

The experimental data used for this assessment were obtained as a single frame (time point 14) from the larger temporal dataset.
In Fig. \ref{fig:static}(a) we show the posterior mean estimate, together with its posterior variance, for a reference conductivity field generated using a high-quality symmetric collocation forward solver with $n_A + n_B = 207$ collocation points.
Adjacent, in Fig. \ref{fig:static} we show the posterior mean and variance for the conductivity field obtained with probabilistic numerical methods for increasing values of $n_A + n_B$.
It is seen that both the posterior mean and posterior standard deviation produced with the probabilistic numerical method converge to the reference posterior as the number of collocation points is increased.
However, at coarse resolution, the posterior variance is inflated to reflect the contribution of an discretisation uncertainty to each numerical solution of the forward problem.
This provides automatic protection against the erroneous results seen in Fig. \ref{ex fail}.

In Fig. \ref{fig:int var} we plot the number $n_A + n_B$ of collocation basis points versus the integrated posterior standard-deviation for the unknown conductivity field.
These results demonstrate the computation-precision trade-off that is made possible with probabilistic numerical methods, and are consistent with the preliminary investigation in \cite{Cockayne2016,Cockayne2016b}.
Next, we turn to the temporal problem that motivates this research.

\begin{figure}[t!]
\centering
\includegraphics[width = 0.49\textwidth]{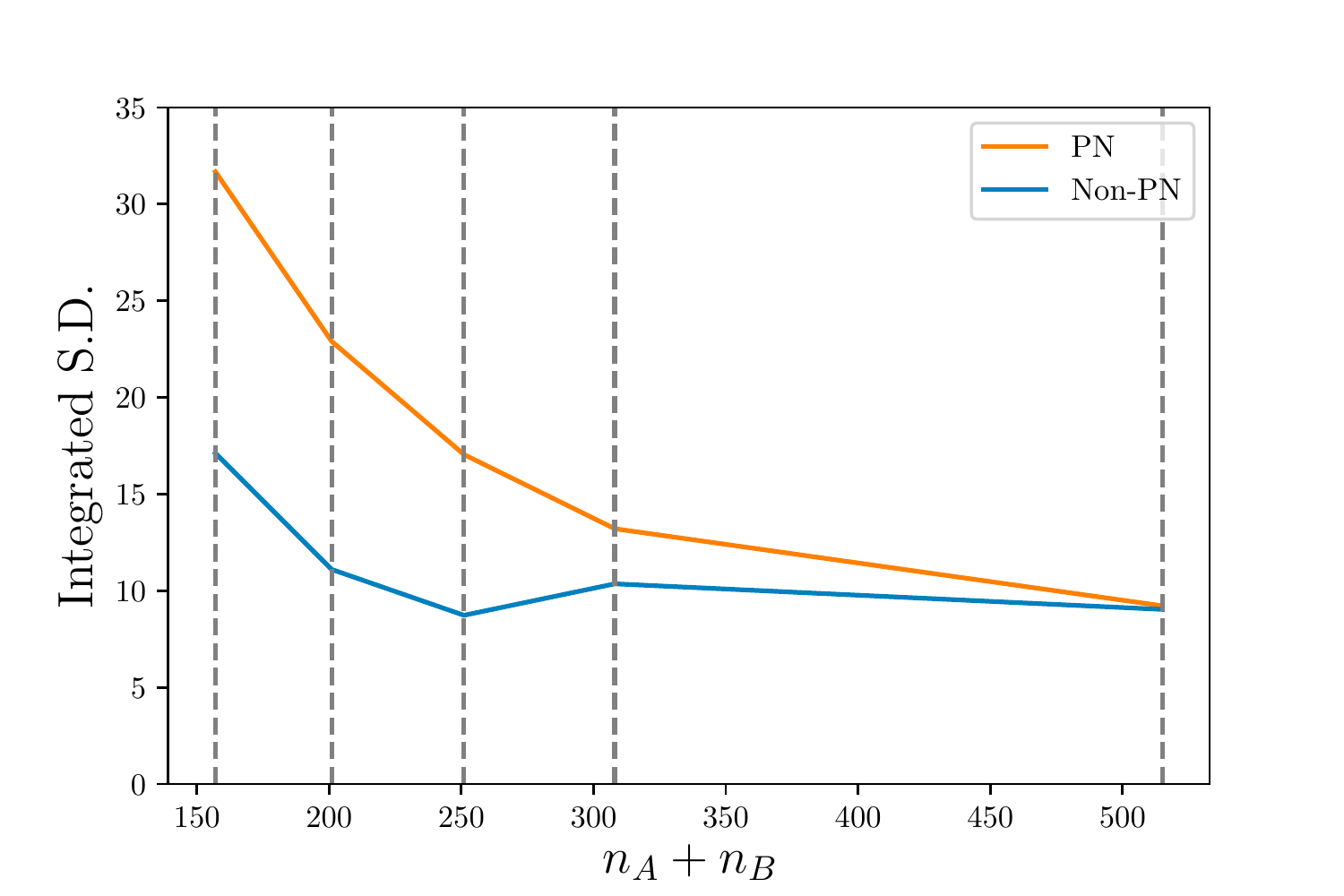}
\caption{Posterior standard-deviation for the conductivity field, integrated over the domain $D$, as a function of the number $n_A+n_B$ of collocation points.
The blue curve represents the standard case where error due to discretisation of the PDE is not quantified (``Non-PN'') whilst the red curve represents the case where a probabilistic numerical method is used to provide uncertainty quantification for the PDE solution itself (``PN''). }
\label{fig:int var}
\end{figure}

\subsubsection{Temporal Recovery Problem}

For this experiment, data were obtained at 49 regular time intervals.
Times 1-10 were obtained before injection of the potassium chloride solution, while the injection occurred rapidly, between frames 10 and 11.
The remaining time points 12-49 capture the diffusion and rotation of the liquids, which is the behaviour that we hope to recover.

The parameter $\lambda$ controls the temporal smoothness of the conductivity field in the prior model.
Three fixed values, $\lambda \in \{10, 100, 1000\}$ were considered in turn, representing decreasing levels of smoothness.
The case of no temporal regularisation was also displayed.
Our method was applied to estimate the time-evolution of the field.
Results are shown in Fig. \ref{fig:temporal}.
The counter-clockwise rotation of the fluid was first clearly seen for $\lambda = 100$, whilst the value $\lambda = 10$ represented too much temporal regularisation, which caused this information to be lost.
On the other hand, the predictive posterior in Eqn. \ref{eq: predictive posterior} is trivial in the limit of large $\lambda$, so that in our context smaller values of $\lambda$ are preferred.
It is expected that an analogous calibration can be performed in the real-world context.

\begin{figure}[t!]
\centering

\includegraphics[width=\textwidth]{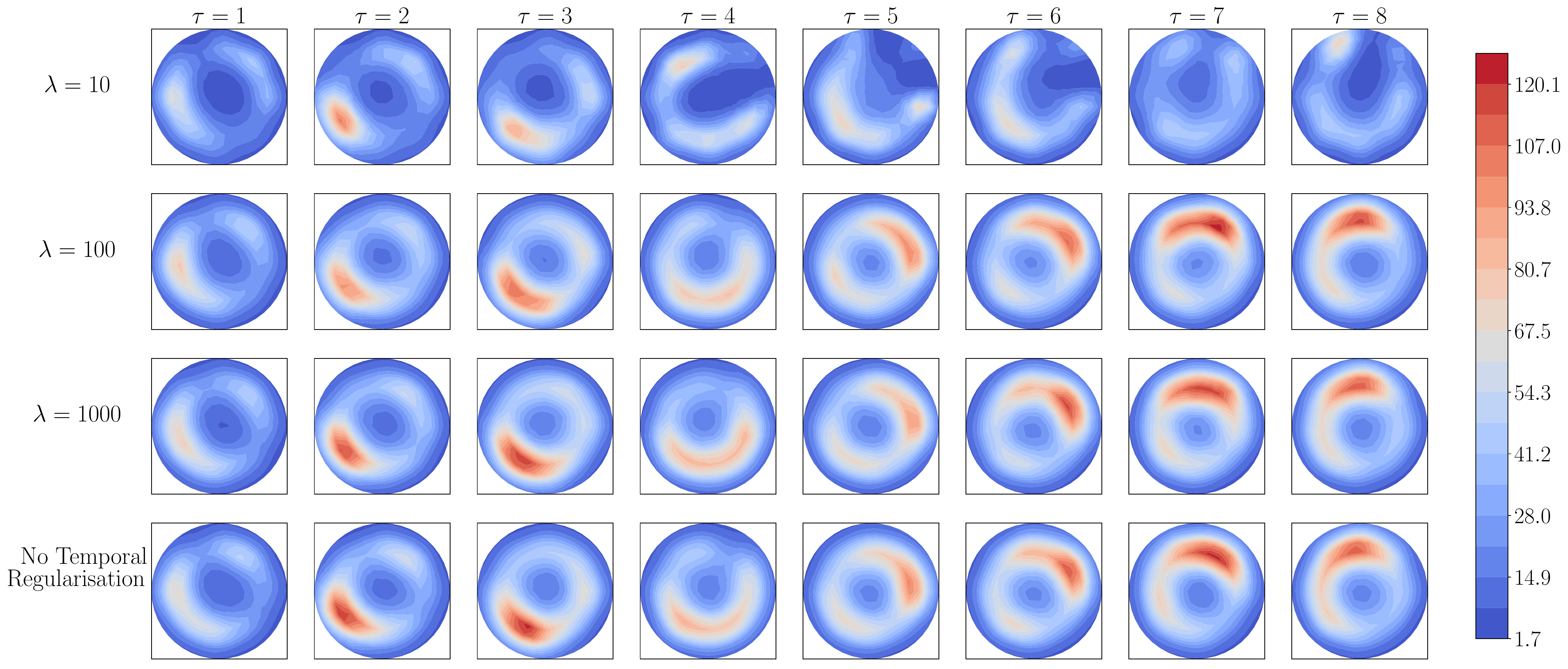}
\caption{Posterior mean for the conductivity field $a(\cdot,t)$, shown as a function of the time index.
Here we consider the dependence of the recovered field on the choice of the temporal covariance parameter $\lambda$.
The value $n_A + n_B = 209$ was used. }
\label{fig:temporal}
\end{figure}

To assess whether the problems of bias and over-confidence due to discretisation can be mitigated in the temporal context, where discretisation errors are propagated and accumulate over time, we fixed $\lambda = 100$ and inspected the posterior over the coefficients $\psi_i$ at the final time point $t_n$.
Results in Fig. \ref{temporal coefficients} confirmed that the posterior $\Pi_n^*$ (red) was inflated relative to the standard approximate posterior (blue) and tended to cover more of the true posterior $\Pi_n$ (grey) in its effective support.
This provides empirical evidence to support the use of the proposed posterior $\Pi_n^*$.

\begin{figure}[t!]
\centering
\includegraphics[width = 0.5\textwidth]{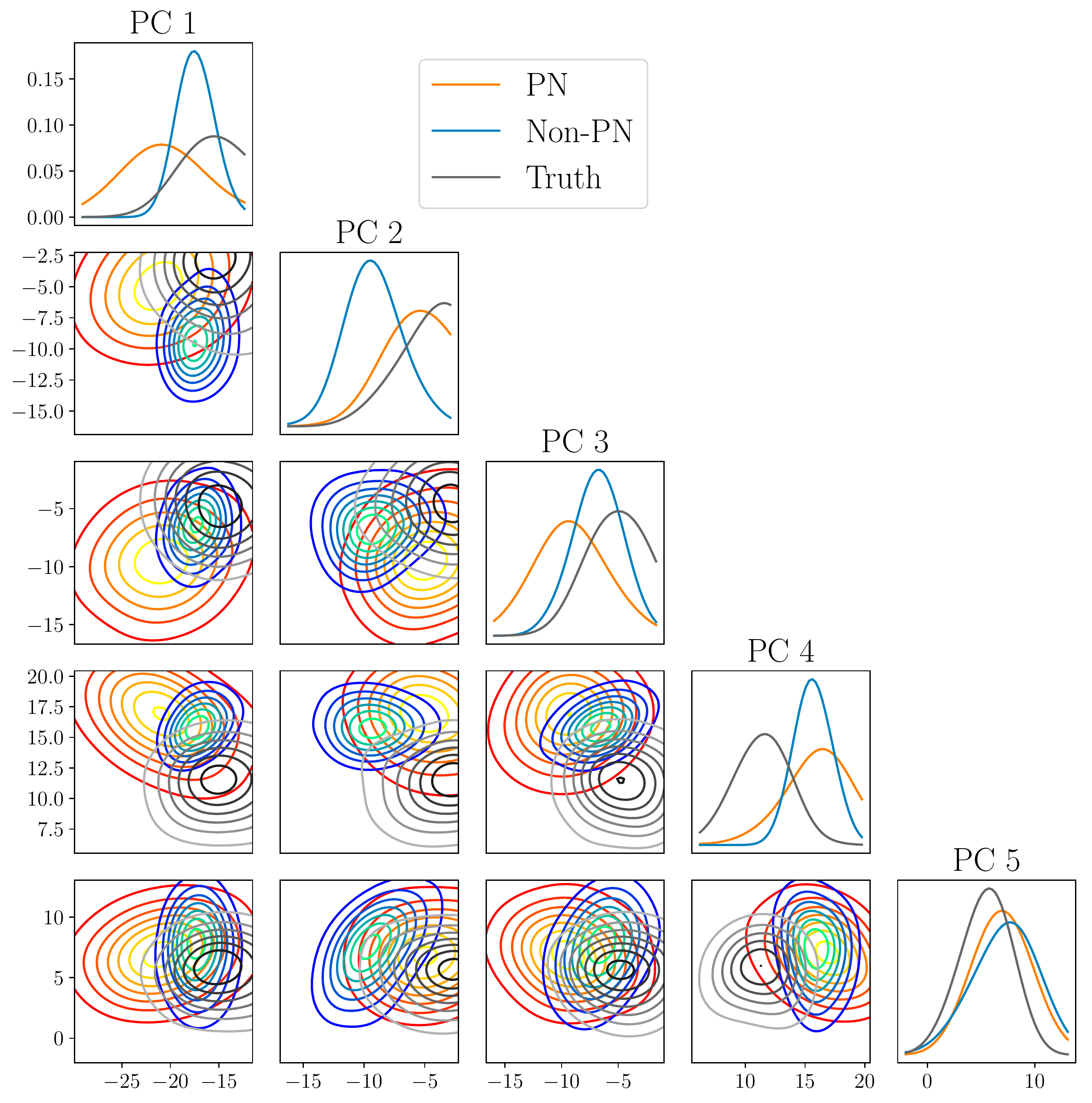}
\caption{Posterior distribution over the coefficients $\psi_i$, at the final time point $t_n$.
Here a small number $n_A+n_B = 165$ of collocation points was used to discretise the PDE.
The reference posterior distribution (grey) was compared to the approximation to the posterior obtained when discretisation of the PDE is not modelled (``Non-PN'') and modelled (``PN''). 
Projections onto principal components (PC) of the reference posterior are displayed. }
\label{temporal coefficients}
\end{figure}

In Fig. \ref{fig:int var 2} we again plot the number $n_A + n_B$ of collocation basis points versus the integrated posterior standard-deviation for the unknown conductivity field, again at the final time point.
These results demonstrate a computation-precision trade-off similar to that which was observed for the static recovery problem.
Compared to the static recovery problem in Fig. \ref{fig:int var}, however, we observed greater inflation of the posterior standard deviation when probabilistic numerical methods were used.
This reflects the fact that we have constructed a full probability model for the effect of discretisation error, which is able to capture how these errors propagate and accumulate within the computational output.

\begin{figure}[t!]
\centering
\begin{subfigure}{0.49\textwidth}
\includegraphics[width = \textwidth]{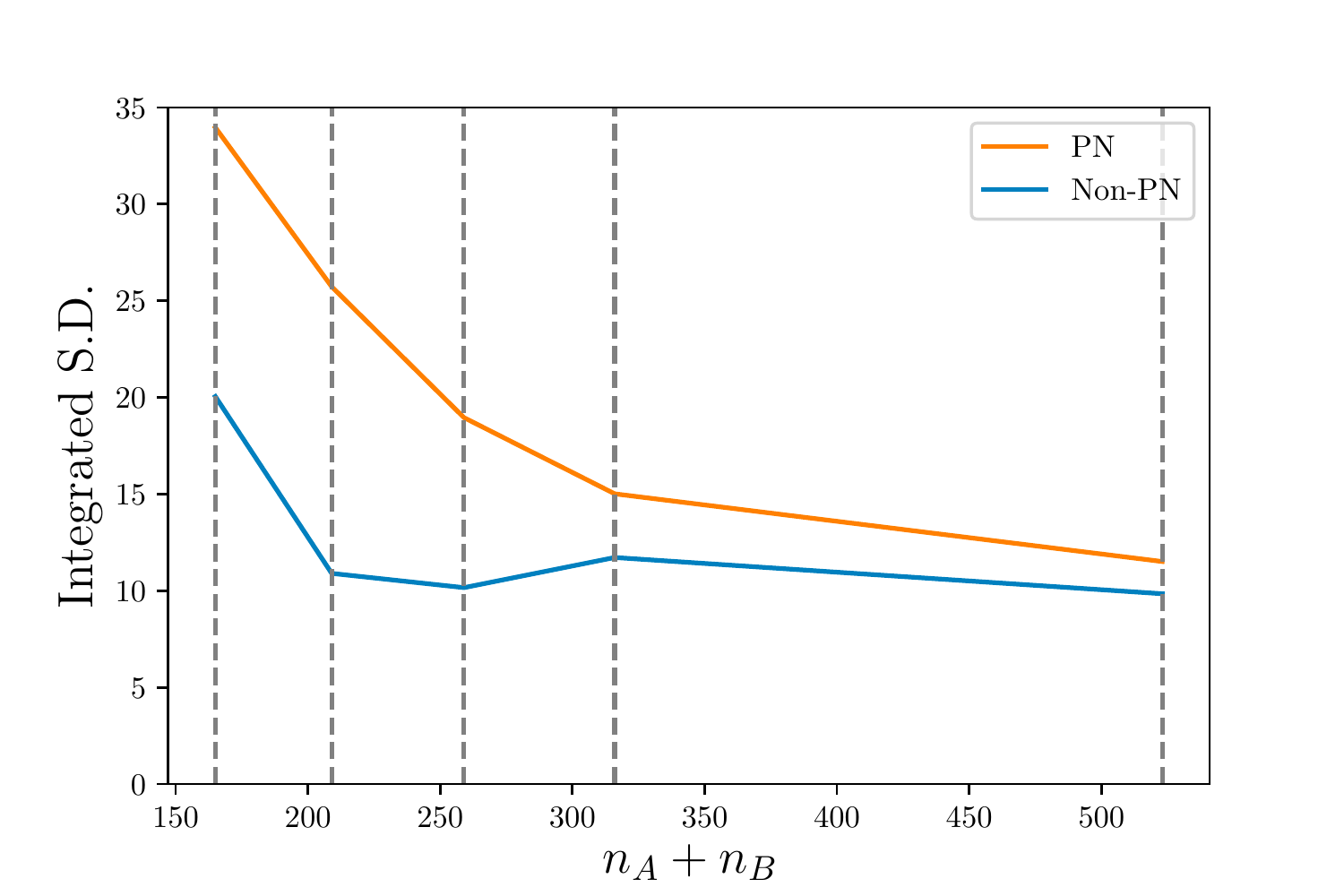}
\caption{}
\label{tmp fig 1}
\end{subfigure}
\begin{subfigure}{0.49\textwidth}
\includegraphics[width = \textwidth]{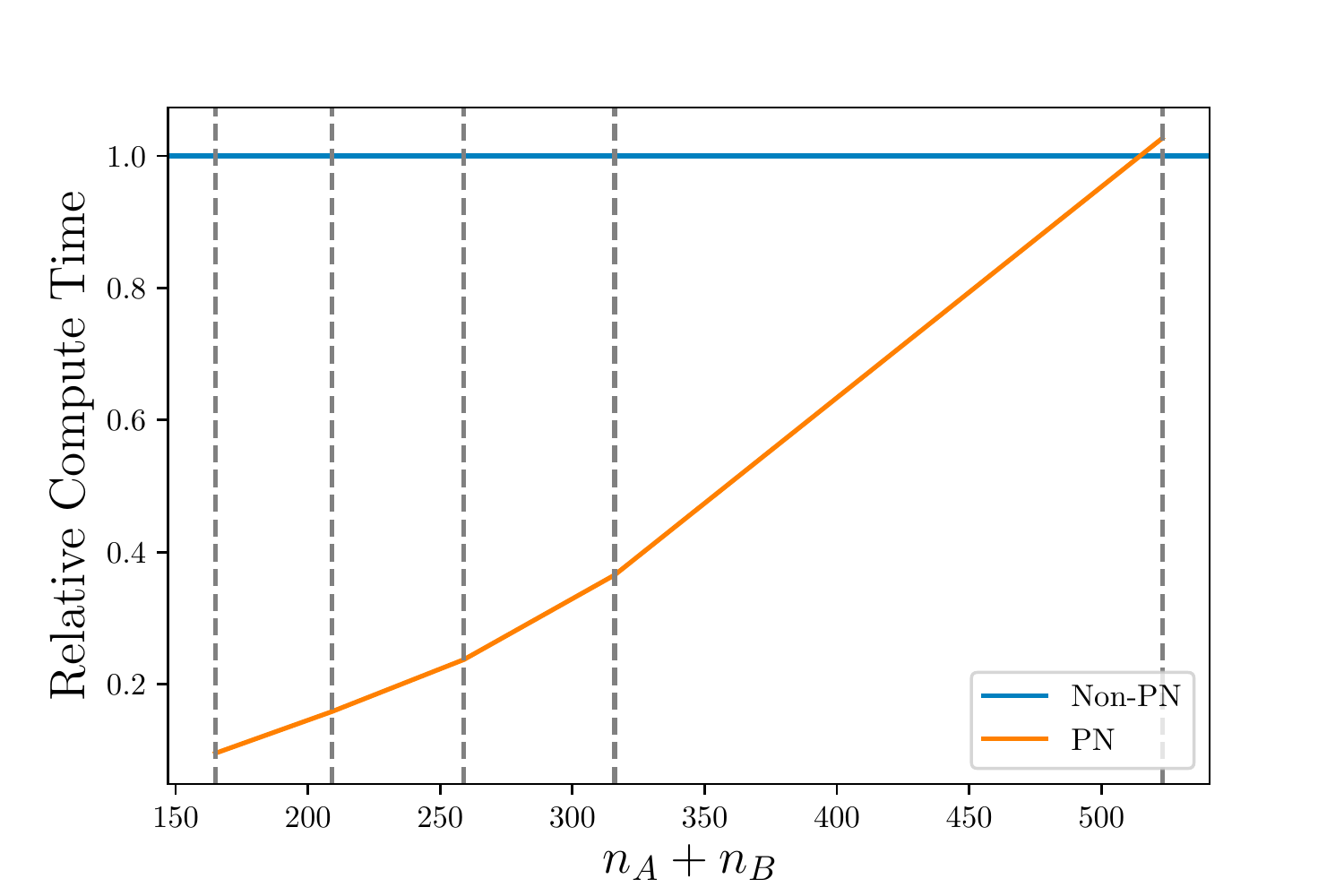}
\caption{}
\label{tmp fig 2}
\end{subfigure}
\caption{Temporal recovery problem: (a) Posterior standard-deviation for the conductivity field $a(\cdot,t_n)$ at the final time point $t_n$, integrated over the domain $D$.
(b) Computational time required by the proposed method, relative to the (non-probabilistic) symmetric collocation method applied on a resolved grid.}
\label{fig:int var 2}
\end{figure}

\section{Discussion} \label{sec:discussion}

The motivation for this research was industrial process monitoring, but the associated methodological development was general.
In particular, we addressed the important topic of how to perform uncertainty quantification for numerical error due to discretisation of the physical governing equations specified through a PDE.
Typically this source of error is ignored, or its contribution bounded through detailed numerical analysis, such as \cite{Schwab2012}.
However, in the temporal setting, theoretical bounds are difficult to obtain due to propagation and accumulation of errors, so that it is unclear how to proceed.

In this work we proposed a statistical solution, wherein a probabilistic numerical method was used to provide uncertainty quantification for the discretisation error associated with a collocation-type numerical method.
Aided by sequential Monte Carlo sampling methods, it was shown how this model for \emph{discretisation uncertainty} can be employed in the temporal context.
The result was a more comprehensive quantification of uncertainty, that accounts for both statistical uncertainty and for propagation and accumulation of discretisation uncertainty in the final output.
For our motivating industrial application, this work is expected to facilitate more reliable anticipation and pro-active control of the hydrocyclone, to ensure safety in operation \citep{Bradley2013}.
Beyond that, it is anticipated that the mitigation of bias and over-confidence observed in our experimental results is a feature of probabilistic numerical methods in general.

The methods that we pursued differ in a fundamental sense to techniques that seek to \emph{emulate} the forward model.
Emulation, as well as dimension reduction methods, have been widely used in static recovery problems to reduce the computational cost of repeatedly solving the governing PDE \citep{Marzouk2007,Marzouk2009,Cotter2010,Schwab2012,Cui2016,Chen2016,Chen2016b}.
Notably \cite{Stuart2016,Calvetti2017} considered integrating the emulator uncertainty into inference for model parameters. 
However, to train an emulator it is usually required to have access to a training set of parameters $a$ for which the exact solution $u$ of the PDE is provided.
Thus the focus of emulation is related to generalisation in the $a$ domain, as opposed to quantification of discretisation uncertainty in the $u$ domain.
An interesting extension of this work would be to combine these two complementary techniques; this would be expected to reduced the computational cost of the proposed method.

The principal limitation of our approach was that a Markov temporal evolution of the conductivity field $a(\cdot,t)$ was assumed.
Physical consideration suggest that the Markov assumption is incorrect, since time-derivatives of all orders of this field will vary continuously and thus encode information that is useful.
However, it is not clear how this information can be encoded into a prior for the temporal evolution of the conductivity field whilst preserving the computationally convenient filtering framework.
On the other hand, the Markov prior can be statistically justified in that it represents an encoding of partial prior information \citep{Potter1983}.
It remains a problem for future work to investigate the potential loss of estimation and predictive efficiency as a result of encoding only partial information into the prior model.

The second limitation we highlight is that the prior model for the potential field $u(\cdot,t)$ did not include a temporal component.
This choice was algorithmically convenient, as it de-coupled each of the forward problems of solving the PDE, such that each time a probabilistic numerical method was called, it could be implemented ``out of the box''.
Nevertheless, a temporal covariance structure in the parameter $a$ implies that there also exists such structure in $u$ and the effect of not encoding this aspect of prior information should be further investigated.

Overall, we are excited by the prospect of new and more powerful methods for uncertainty quantification that can deal with both statistical and discretisation error in a unified analytical framework.

\paragraph{Acknowledgements:}
The authors are grateful for detailed suggestions from the Associate Editor and two anonymous Reviewers.
\if1\blind
{
This research was supported by the Australian Research Council Centre of Excellence for Mathematical and Statistical Frontiers and by the Key Technology Partnership program at the University of Technology Sydney.
CJO and MG were supported by the Lloyd's Register Foundation programme on data-centric engineering at the Alan Turing Institute, UK.
MG was supported by the EPSRC grants [EP/K034154/1, EP/R018413/1, EP/P020720/1, EP/L014165/1], an EPSRC Established Career Fellowship [EP/J016934/1] and a Royal Academy of Engineering Research Chair in Data Centric Engineering.
The collection of tomographic data was supported by an EPSRC grant [GR/R22148/01].
This material was based upon work partially supported by the National Science Foundation under Grant DMS-1127914 to the Statistical and Applied Mathematical Sciences Institute. Any opinions, findings, and conclusions or recommendations expressed in this material are those of the author(s) and do not necessarily reflect the views of the National Science Foundation.

For the numerical results reported in Section~\ref{sec:experiments} we thank T.~J.~Sullivan for the use of computing facilities at the Freie Universit\"at Berlin, funded by the Excellence Initiative of the German Research Foundation.
} \fi

\appendix

\section{Proof of Results in the Main Text} \label{app: proofs}

\begin{proof}[Proof of Proposition \ref{prop: is cts}]
Let $L_H^2 = \{v : \bar{D} \times \Omega \rightarrow \mathbb{R} \text{ s.t. } \mathbb{E}\|v\|_H^2 < \infty\}$, which is a Banach space with norm $(\mathbb{E}\|\cdot\|_H^2)^{1/2}$; see section 2.4 of \cite{Dashti2013}.
Following Thm. 2.10 in \cite{Dashti2013}, consider the partial sums
$$
\log \; a^N(\cdot,t) = \sum_{i=1}^N i^{- \alpha} \psi_i(t) \phi_i(\cdot) .
$$
For $N > M$ we have
\begin{eqnarray*}
\mathbb{E} \| \log \; a^N(\cdot,t) - \log \; a^M(\cdot,t) \|_{H}^2 & = & \mathbb{E} \sum_{i=M+1}^N i^{-2\alpha} |\psi_i(t)|^2 \\
& \leq & \sum_{i=M+1}^N i^{-2\alpha} [(m_\psi^{\max})^2 + k_\psi^{\max}] \\
& \leq & [(m_\psi^{\max})^2 + k_\psi^{\max}] \sum_{i=M+1}^\infty i^{-2 \alpha} .
\end{eqnarray*}
Since $\alpha > 1/2$, the RHS vanishes as $M \rightarrow \infty$.
Thus, as $L_H^2$ is a Banach space, $\log \; a(\cdot,t)$ exists as an $L_H^2$ limit.
It follows that $\log \; a(\cdot,t)$, and hence $a(\cdot,t)$, takes values almost surely in $C^1(\bar{D})$.
\end{proof}

\begin{proof}[Proof of Corollary \ref{field increment corr}]
From direct algebra:
\begin{eqnarray*}
\theta_\Delta(x) \; = \; \theta(x,t+s) - \theta(x,t) & = & \sum_{i=1}^\infty i^{-\alpha} [\psi_i(t+s) - \psi_i(t)] \phi_i(x) \\
& = & \sqrt{ \lambda (s + \tau) } \sum_{i=1}^\infty i^{-\alpha} \xi_i \phi_i(x)
\end{eqnarray*}
where the $\xi_i$ are independent $\mathrm{N}(0,1)$.
From the Karhounen-Lo\'{e}ve theorem \citep{Loeve1977}, this is recognised as a Gaussian random field with mean function $m_\Delta(x) = 0$ and covariance function 
\begin{eqnarray*}
k_\Delta(x,x') & = & \lambda (s + \tau) \sum_{i=1}^\infty i^{-2\alpha} \phi_i(x) \phi_i(x')  \\
& = & \lambda (s + \tau) k_\phi(x,x') , 
\end{eqnarray*}
as claimed.
\end{proof}

\begin{proof}[Proof of Proposition \ref{prior well defined for u}]
Let $H(k)$ denote the reproducing kernel Hilbert space associated with a kernel $k$.
In \cite{Cialenco2012}, Lemma 2.2, it was established that a generic integral-type kernel $k_u$, as in Eqn. \ref{int kernel}, corresponds to the covariance function for a Gaussian process that takes values almost surely in $H(k_u^0)$.
To complete the proof, Corr. 4.36 (p131) in \cite{Steinwart2008} establishes that if $k_u^0 \in C^{\beta \times \beta}(\bar{D} \times \bar{D})$ then $H(k_u^0) \subset C^{\beta}(\bar{D})$.
\end{proof}

\begin{proof}[Proof of Proposition \ref{fill dist result}]
Let $\mu(x)$ and $\sigma(x)$ denote, respectively, the posterior mean and standard deviation of $u(x)$ under the probabilistic meshless method.
Prop. 4.1 of \cite{Cockayne2016} established that the posterior mean $\mu(x)$ satisfies $|\mu(x) - u(x)| \leq \sigma(x) \|u\|_{H(k_u)}$ and Prop. 4.2 of \cite{Cockayne2016} established that the posterior standard deviation $\sigma(x)$ satisfies $\sigma(x) \leq C h^{\beta - 2 - d/2}$ for some constant $C$ independent of $x \in D$.
In particular we have $\|\mu_j - \mathcal{P}u\|_2 = O(h^{\beta - 2 -d/2})$.
Lastly, Thm. 4.3 of \cite{Cockayne2016} established a generic rate of contraction for the mass of a Gaussian distribution of $O(\epsilon^{-2} h^{2\beta - 4 - d})$, as required.
(Note that these results are specific consequences of more general results found in Lem. 3.4 in \cite{Cialenco2012} and Secs. 11.3 and 16.3 of \cite{Wendland2005}.)
\end{proof}

\section{Details of the Markov Kernel Used} \label{ap: markov kernels}

This appendix contains a description of the Markov transition kernel that was used.
Indeed, for the Markov transition kernel $M_{n-1}$ used in the {\bf Move} step in Section \ref{subsec:SMC}, we employed the pre-conditioned Crank--Nicholson method.
This will now be described.

Let $\Pi$ be a probability distribution on a measurable space $(\Theta,\mathcal{B})$, such that the Radon-Nikodym derivative $\mathrm{d}\Pi / \mathrm{d}\Pi_0$ is well-defined for a fixed reference distribution $\Pi_0$.
Recall that a Markov transition kernel $M$ which leaves $\Pi$ invariant is a function $M : \Theta \times \mathcal{B} \rightarrow [0,1]$ such that
\begin{enumerate}
\item the map $\theta \mapsto M(\theta,B)$ is $\mathcal{B}$-measurable for all $B \in \mathcal{B}$
\item the map $B \mapsto M(\theta,B)$ is a probability measure on $(\Theta,\mathcal{B})$ for all $\theta \in \Theta$
\item invariance; $\Pi(B) = \int M(\theta,B) \mathrm{d}\Pi(\theta)$ for all $B \in \mathcal{B}$.
\end{enumerate}
The pre-conditioned Crank--Nicholson method (with step size $\beta \in (0,1)$)
\begin{eqnarray*}
\theta^* & := & \sqrt{(1 - \beta^2)} \theta + \beta \xi, \qquad \xi \; \sim \; \Pi_0 \\
\theta_{\text{new}} & = & \left\{ \begin{array}{ll} \theta^* & \text{with probability } \alpha(\theta,\theta^*) = \min\left\{ 1 , \frac{\mathrm{d}\Pi}{\mathrm{d}\Pi_0}(\theta^*) / \frac{\mathrm{d}\Pi}{\mathrm{d}\Pi_0}(\theta) \right\} \\ \theta & \text{otherwise} \end{array} \right.
\end{eqnarray*}
for generating the next state $\theta_{\text{new}}$ of the Markov chain, given the current state is $\theta$, corresponds to a Markov transition kernel
\begin{eqnarray*}
M(\theta,B) = \int 1[\theta^* \in B] \alpha(\theta,\theta^*) + 1[\theta \in B] (1 - \alpha(\theta,\theta^*)) \mathrm{d}\Pi_0(\xi)
\end{eqnarray*}
that leaves $\Pi$ invariant.
The associated Markov chain has been shown to have non-vanishing acceptance probability when $\Theta$ is a Hilbert space and $\Pi_0$ is a Gaussian distribution \citep[Thm. 6.4 of][]{Cotter2013}.
This was the Markov transition kernel that we employed, with $\beta$ tuned to achieve an acceptance rate of between 10\%--25\%, $\Theta$ being the state space of $\theta_n$  and $\Pi_0$ being the prior $\Pi_{0,1}$, defined in the main text.
Nevertheless, it is not the unique Markov transition kernel that could be used; see \cite{Cotter2013} for several examples of Markov transition kernels that are well-defined in the Hilbert space context.


\end{document}